\documentclass[ss]{imsart}

\RequirePackage{amsthm,amsmath,amsfonts,amssymb}
\RequirePackage[numbers]{natbib}
\RequirePackage[colorlinks,citecolor=blue,urlcolor=blue]{hyperref}
\RequirePackage{graphicx}

\usepackage{stmaryrd,stackrel,algorithmic}
\usepackage{float}

\usepackage{caption}
\captionsetup{compatibility=false} %
\usepackage{subcaption}


\usepackage[ruled,vlined]{algorithm2e}

\startlocaldefs

\renewcommand{\le}{\leqslant}
\renewcommand{\leq}{\leqslant}
\renewcommand{\ge}{\geqslant}
\renewcommand{\geq}{\geqslant}
\theoremstyle{plain}

\newtheorem{theorem}{Theorem}[section]

\theoremstyle{definition}
\newtheorem{definition}[theorem]{Definition}

\theoremstyle{remark}

\endlocaldefs

\begin{document}
\begin{frontmatter}
\title{Understanding the Hamiltonian Monte Carlo through its Physics Fundamentals and Examples}
\runtitle{Hamiltonian Monte Carlo}

\begin{aug}
\author[A]{\fnms{Mario}~\snm{Molina}\ead[label=e1]{mariomolina@comunidad.unam.mx}}
\author[B]{\fnms{Abraham}~\snm{Granados}\ead[label=e2]{abrahamgc@uchicago.edu}},
\author[A]{\fnms{Jorge}~\snm{González~Cázares}\ead[label=e3]{jorge.gonzalez@sigma.iimas.unam.mx}}
\and
\author[C]{\fnms{Isaías}~\snm{Bañales}\ead[label=e4]{isaias@ciencias.unam.mx}},

\address[A]{Department of Probability and Statistics,
IIMAS--UNAM\printead[presep={,\ }]{e1,e3}}

\address[B]{Department of Mathematics,
University of Chicago\printead[presep={,\ }]{e2}}

\address[C]{Faculty of Science,
UNAM\printead[presep={,\ }]{e4}}

\runauthor{M. Molina, A. Granados, J. González~Cázares and I. Bañales}
\end{aug}

\begin{abstract}
The Hamiltonian Monte Carlo (HMC) algorithm is a powerful Markov Chain Monte Carlo (MCMC) method that uses Hamiltonian dynamics to generate samples from a target distribution. To fully exploit its potential, we must understand how Hamiltonian dynamics work and why they can be used in a MCMC algorithm. This work elucidates the Monte Carlo Hamiltonian, providing comprehensive explanations of the underlying physical concepts. It is intended for readers with a solid foundation in mathematics who may lack familiarity with specific physical concepts, such as those related to Hamiltonian dynamics. Additionally, we provide  Python code for the HMC algorithm, examples and comparisons with the Random Walk Metropolis-Hastings (RWMH) algorithm, alongside an exploration of modern variants such as the No-U-Turn Sampler (NUTS), the Dual Averaging (DA) scheme and the repelling-attracting HMC (raHMC), to highlight HMC's strengths and weaknesses when applied to Bayesian Inference.
\end{abstract}

\begin{keyword}[class=MSC]
\kwd[Primary ]{65C05}
\end{keyword}

\begin{keyword}
\kwd{Hamiltonian Monte Carlo}
\kwd{Markov Chain Monte Carlo}
\kwd{Hamiltonian Dynamics}
\end{keyword}

\end{frontmatter}

\section{Introduction}
The solutions offered by Bayesian statistics, while appealing and sufficient, could not be implemented due to computational constraints in the past. However, these limitations have been reduced with the evolution of programming languages, the development of efficient algorithms, the capacity of computers to store large amounts of data, and their ability to handle  more complex models, as explained \cite{andri}.

One of the most widely used algorithms used to address Bayesian statistics models is HMC, originally known as Hybrid Monte Carlo and introduced in \cite{duane1987hybrid} in the context of lattice field theory, before being further developed in the statistical literature. The term ``Hamiltonian Monte Carlo'' was first introduced in \cite{mac} where the method is discussed in a statistical context. A detailed treatment of HMC is provided in \cite{liu2001monte}, where the algorithm is presented within the broader framework of Monte Carlo methods, highlighting its computational advantages. In addition, \cite{neal2011mcmc} offers a comprehensive exposition of HMC, including both theoretical insights and practical implementation strategies, and remains a standard reference in the literature.

More recently, the understanding of HMC has been significantly refined. The work of \cite{betancourt2017conceptual} provides a conceptual introduction to HMC, developing geometric intuition and offering insight into why the method performs well, as well as the conditions under which it succeeds or fails, while \cite{bou2018geometric} establishes a rigorous connection between HMC and geometric numerical integration. These developments have led to further methodological advances, including adaptive and randomized variants that aim to improve efficiency and robustness in high-dimensional settings~\cite{hoffman2014no,vishwanath2024repelling}.

 To mention  a few examples, in recent years HMC has been used in credit risk to predict potential loan defaults \citep{credit_risk}; in astrophysics to detect low-frequency gravitational waves using pulsar timing arrays (PTAs) \citep{gravitational_wave}; in structural engineering, to address the inverse problem of damage identification in structures \citep{structural}; in seismology, to model time-varying seismicity rates using deep Gaussian processes \citep{muir2023deep}; in traffic safety, to investigate factors influencing the severity of injuries to car drivers, car passengers, and truck occupants in car-truck collisions \cite{traffic}; in forensic genetics, to address the challenges of analyzing mixed DNA profiles \cite{forensics}; in statistics, to infer and predict nonparametric probability density functions (PDFs) using constrained Gaussian processes \cite{staa}; in epidemiology, to develop a simulation model to understand the spread of the COVID-19 virus immediately after an infected person coughs or sneezes \cite{covid}; and in medical imaging, to improve the reliability of deep neural networks used for medical image segmentation \cite{medic}.

This paper introduces the theory and Python codes of HMC, and provides examples and comparisons. Section \ref{apena} introduces the fundamental principles of Hamiltonian mechanics. Section \ref{se_construc} presents the construction of the HMC algorithm, important theoretical results and its sensitivity to the choice of hyperparameters $\epsilon$ (leapfrog step size) and $L$ (trajectory length). Section~\ref{sec:modern_variants} reviews modern extensions of the algorithm, such as the No-U-Turn Sampler (NUTS) and the Dual Averaging (DA), which adaptively tune $\epsilon$ and $L$, and the repelling-attracting HMC (raHMC), which helps the chain visit all local modes. Section \ref{compacompa} presents illustrative examples and comparative analyses between HMC (and its variants) and RWMH in terms of execution time, convergence, effective sample size (ESS) and seconds per ESS. The Python code for these examples can be accessed via the GitHub repository at  \url{https://github.com/unabrahams/Hamiltonian-Monte-Carlo-}. Finally, Section \ref{conclu} presents the conclusions of this work.

\section{An introduction to Hamiltonian Mechanics}\label{apena}

This section introduces the fundamental elements of Hamiltonian mechanics, which are essential for comprehending the underlying principles of the HMC algorithm. This content is presented in a concise manner and is organized in a way that does not presuppose any prior knowledge of mechanics.

\subsection{Analytical Mechanics}

Analytical mechanics provides an abstract formulation of Newtonian mechanics. In contrast to the approach taken in classical mechanics, where the equations of motion are derived from vector functions and the positions of particles and the forces acting on them are defined in three-dimensional space, the equations of motion in this approach are derived from scalar functions. Moreover, their mathematical formulation is independent of any change of coordinates, thus facilitating their application to a variety of problems \citep{garrigos1}.

\subsubsection{Basic concepts}

In the context of physics, the degrees of freedom of a system are defined as the minimum number of independent scalar values that are required to determine the position of the particles within a given space. In Newtonian mechanics, the position of $n$ particles moving without constraints can be determined by using $n$ position vectors, resulting in a total of $3n$ degrees of freedom. This is discussed in greater detail in the 
\hyperref[appA]{Appendix}.

The generalized coordinates are a set of parameters $q_{1},q_{2},...,q_{s}$ that allow us to determine in an uniquely way the position of the particles in space, generally these parameters are typically measurable physical quantities, including positions along coordinate axes, angles, and distances.

In a particle system described by $s\in\mathbb{Z}_+$ generalized coordinates, the function $q : \mathbb{R} \to \mathbb{R}^s$ represents the generalized coordinates of the system at time $t$. Specifically,  $q(t) = (q_1(t), q_2(t), \dots, q_s(t))$, where $q_i(t)$ denotes the position along the $i$-th coordinate at time $t$. The objective of employing generalized coordinates is to derive the equations of motion for a system in a more efficient way. In particular, we seek a function  $\phi _{i}:\mathbb{R}^{s+1}\rightarrow{\mathbb{R}}^{3}$ such that
\begin{equation}
r_{i}\left(t\right)=\phi_{i}\left(q\left(t\right),t\right),
\end{equation}
where $r_{i}\left(t\right) \in \mathbb{R}^{3}$ represents the position of the $i$-th particle in cartesian coordinates, thus the complete position of the system in three-dimensional space can also be known, given the value of $q\left(t\right)$.

We define generalized velocities at time $t$ as the vector of derivatives of the generalized coordinates, denoted by  $\frac{dq\left(t\right)}{dt}$. We will refer to the phase space of velocities, which we denote by $M$, as the subset of $\mathbb{R}^{2s}$ such that for all times $t$,

\begin{equation}
\left(q_{1}\left(t\right),q_{2}\left(t\right),...,q_{s}\left(t\right),\frac{dq_{1}\left(t\right)}{dt},\frac{dq_{2}\left(t\right)}{dt},...,\frac{dq_{s}\left(t\right)}{dt}\right)\in M.
\end{equation}

\subsubsection{The Principle of Least Action and Euler--Lagrange Equations}\label{capi}

The reformulation of Newtonian mechanics is founded upon the principle of least action, which is also known as Hamilton's principle. In order to gain an understanding of this principle, it is first necessary to introduce the Lagrangian and the action.

The Lagrangian of a particle system is a function $L: \mathbb{R}^{2s+1} \rightarrow \mathbb{R}$, dependent on time, coordinates, and generalized velocities. It is not unique; as discussed in \citep{gera}, there are infinitely many functions from which the equations of motion can be derived. Generally, one works with the standard Lagrangian, which depends on time only through the generalized coordinates and velocities and is given by the difference between the kinetic and potential energy, both of which are defined in the \hyperref[appA]{Appendix},

\begin{equation}
\label{lag}
L\left(q\left(t\right), \frac{d q\left(t\right)}{dt}  \right)=T\left(\frac{d q\left(t\right)}{dt} \right)-U\left(q\left(t\right)\right).
\end{equation}

A proof demonstrating why the Lagrangian can be expressed in this form, beginning with D'Alembert's principle, which states that the external and inertial forces acting on a body are in equilibrium, can be found in \cite{walter}.

On the other hand, the action $A$ is a functional of the Lagrangian over a time interval $\left[t_{0}, t_{1}\right]$. It is defined as  
\begin{equation}\label{actionap}
A\left[L\right]=\sideset{}{_{t_{0}}^{t_{1}}}\int L\left(q\left(t\right),\frac{d q\left(t\right)}{dt} \right)dt.
\end{equation}

Hamilton's principle postulates that every system of particles can be described by a Lagrangian, and among all potential trajectories that the system can take within the time interval $[t_0, t_1]$, the system follows the trajectory that minimizes the action \eqref{actionap} or leads to a saddle point \citep{man}.

The trajectory followed by the system is determined by the Euler--Lagrange equations, given in \eqref{tr7}. For a particle system with $n$ degrees of freedom, these consist of $n$ second-order differential equations, where the unknowns are the generalized coordinates  $q_{1}, q_{2}, \cdots, q_{n}$, which govern the motion of the system. In \cite{man}, the derivation of the Euler--Lagrange equations is presented, illustrating the application of the principle of least action.

\begin{equation}
\label{tr7} \frac{\partial L}{\partial q_{i}}=\frac{d}{dt}\left(\frac{\partial L}{\partial \left( \frac{dq_{i}}{dt} \right) }	\right)  \text{ \hspace{0.2 cm} for all\hspace{0.2 cm}} i \in \left\{1,2,...,n\right\}.
\end{equation}

In conclusion, the Euler--Lagrange equations \eqref{tr7} describe the motion of a particle system. This approach is equivalent to Newton's laws of motion, but expresses particle motion using scalar functions rather than vector functions.

\subsubsection{Hamilton's equations and properties}
\label{sec_ecuaham}

The Hamiltonian approach offers an alternative methodology for describing motion in particle systems, and serves as the foundation for the development of quantum mechanics \citep{ponce}. Hamilton's equations form a system of $2n$ first-order differential equations, which are equivalent to the $n$ second-order Euler--Lagrange differential equations presented in  \eqref{tr7}. These equations are obtained through the application of a Legendre transformation to the Lagrangian $L$. The concept of this transformation is presented in the Definition \eqref{tl}.

\begin{definition}[Legendre transformation] 
\label{tl}
Given a function  $f\left(x_{1},x_{2},...,x_{n}\right)$  whose partial derivatives $\frac{\partial f}{\partial x_{i}}$ exist and are non-zero, we define the Legendre transformation of $f$ as the function
\begin{equation}
g\left(\textbf{s},\textbf{x}\right)=\underset{i \in S}{\sum}s_{i}x_{i}-f(x_{1},x_{2},...,x_{n}),
\end{equation}
where $S\subset \{1, 2, \dots, n\}$, $s_i = \frac{\partial f}{\partial x_i}$, 
$\textbf{s} = \mbox{$\{s_i : i \in S\}$}$, and \mbox{$\textbf{x} = \{x_i : i \notin S\}$}.

\end{definition} 

An important property of this transformation is that we can recover the original function $f$ from its Legendre transformation $g$ by applying to $g$ a Legendre transformation changing the variables $s_{i}$ attached to $g$.

Applying a Legendre transformation to the Lagrangian \eqref{lag} with respect to the $n$ generalized velocities yields the Hamiltonian $H$, defined by
\begin{equation}
\label{hamilt}H\left(q\left(t\right),p\left(t\right),t\right)=\stackrel[i=1]{n}{\sum}p_{i}\left(t\right)\frac{dq_{i}\left(t\right)}{dt}-L\left(q\left(t\right),\frac{dq\left(t\right)}{dt},t\right),
\end{equation}
where
\begin{equation}
\label{mome}
p_{i}\left(t\right)=\frac{\partial L\left(q\left(t\right),\frac{dq_{i}\left(t\right)}{dt},t\right)}{\partial \left(  \frac{dq_{i}\left(t\right) }{dt} \right)} \text{ \hspace{0.2 cm} for \hspace{0.2 cm}} i \in \left\{1,2,...,n\right\},
\end{equation}
as shown in \cite{man}.

The functions $p_{i}\left(t\right)$  presented in \eqref{mome}, are known as the generalized momenta. We will denote by  $p\left(t\right)$ the vector $\left(p_{1}\left(t\right), p_{2}\left(t\right), ..., p_{n}\left(t\right)\right)$ and define the phase space, denoted by $Q$ with $Q\subseteq\mathbb{R}^{2n}$, as the space to which the generalized coordinates and momenta belong.

By applying the Legendre transformation to the Lagrangian, we are able to map from the phase space of velocities to phase space. In this new space, the motion of the generalized coordinates and momentum can be described by first-order differential equations. Furthermore, the Legendre transformation can be reapplied in order to retrieve the generalized velocities, thus enabling the determination of trajectories within the phase space of velocities.

In order to make the notation of the Hamiltonian \eqref{hamilt} more concise, we will omit writing the time dependence of the generalized coordinates, velocities, and momenta, that is:
\begin{equation}
\label{hamilt1}H\left(q,p,t\right)=\stackrel[i=1]{n}{\sum}p_{i}\frac{dq_{i}}{dt}-L\left(q,\frac{dq}{dt},t\right).
\end{equation}

Hamilton's equations describe the trajectories followed by the generalized coordinates and momenta associated with particle systems in phase space. They are derived by calculating infinitesimal displacement or change in the Hamiltonian as the $j$-th generalized coordinate and the $j$-th generalized momentum
\begin{equation}\label{hamiltonianeq}
\frac{\partial H(q,p,t)}{\partial q_{j}} = -\frac{dp_{j}}{dt}
\quad \text{and} \quad
\frac{\partial H(q,p,t)}{\partial p_{j}} = \frac{dq_{j}}{dt},
\quad \text{for all} \enskip j \in \{1,\ldots,n\}.
\end{equation}

The solution of Hamilton's equations \eqref{hamiltonianeq} determines the trajectory of the generalized coordinates and momenta associated with a system of particles. In the case where the Lagrangian defined in \eqref{lag} is used, the Hamiltonian \eqref{hamilt1}  is given by 
\begin{equation}
\label{hamest}
H\left(q,p\right)= T\left(\frac{d q}{dt} \right)+ U\left(q\right), 
\quad\text{where}\quad
p=m\frac{dq}{dt}
\end{equation}

It is of interest to note that the Hamiltonian \eqref{hamest} is the sum of the kinetic and potential energy. Thus, the Hamiltonian \eqref{hamest} is equal to the mechanical energy defined in the \hyperref[appA]{Appendix}. 

Writing the kinetic energy of the Hamiltonian \eqref{hamest} as a function of the generalized momentum will be useful in Section \ref{se_construc}, where the construction of the HMC method is presented. If $K$ is defined as the kinetic energy in terms of the generalized momentum, $K$ will have the correspondence rule $K(p) = \frac{p^{2}}{2m}$, where $p^{2}$ is the dot product of $p$ with itself, since $K(p) = T\big(\frac{dq}{dt}\big)$. Therefore, the Hamiltonian \eqref{hamest} can be written as
\begin{eqnarray}
 H\left(q,p\right)&=& \label{hamfinfin} \frac{p^{2}}{2m}+U\left(q\right).
\end{eqnarray}

For the Hamiltonian \eqref{hamfinfin}, Hamilton's equations  \eqref{hamiltonianeq} become:
\begin{equation}
     \label{hamesteq} \frac{p_{j}}{m}=\frac{dq_{j}}{dt} \hspace{5mm} \text{and} \hspace{5mm} \frac{\partial    U\left ( q \right ) }{\partial q_{j}} = \frac{-dp_{j}}{dt} \hspace{3mm} \text{ for } j \hspace{3mm}\text{in } \left \{1,2,...,n\right \}.
\end{equation}
The HMC algorithm employs the Hamiltonian defined in \eqref{hamfinfin} as it enables the simulation of the target distribution using samples from random normal distributions, as detailed in Section \ref{se_construc}. Furthermore, Hamilton's equations possess interesting characteristics that facilitate the HMC algorithm's  functionality.

First, Hamilton's equations are reversible. In order to explain the concept of reversibility in Hamiltonian dynamics, let us define  $\delta: \mathbb{R} \rightarrow \mathbb{R}^{2n}$ as the function that at time $t$ returns the state in phase space (the space of all possible states of a physical system, in this case, the possible values of the momentum and generalized coordinates) at which a system is located, i.e.,   $\delta \left ( t \right )=\left ( q_{1}\left ( t \right ),..., q_{n}\left ( t \right ), p_{1}\left ( t \right ),..., p_{n}\left ( t \right )
 \right )$. Thus, reversibility can be defined as the ability of the system to return to the state $\delta\left(t\right)$ starting from $\delta\left(t + \varepsilon\right)$ with  $\varepsilon > 0$. 

 Reversibility in Hamiltonian dynamics, as shown in \cite{hand} and stated in Theorem \ref{teorever}, is achieved by following the trajectory generated by Hamilton's equations for a time $\varepsilon$ while reversing the sign of the generalized momentum.

  \begin{theorem}\label{teorever}
The Hamiltonian dynamics associated with the Hamiltonian $H(q,p)$ are reversible under the transformation  $H(q,p) \rightarrow H(q,-p)$.
 \end{theorem}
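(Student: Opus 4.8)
The plan is to exploit the invariance of the Hamiltonian \eqref{hamfinfin}, $H(q,p) = \frac{p^{2}}{2m} + U(q)$, under the momentum flip $p \mapsto -p$, that is $H(q,-p) = H(q,p)$, and to exhibit explicitly the trajectory that carries the system from the flipped state of $\delta(t+\varepsilon)$ back to $\delta(t)$. Writing $\delta(s) = (q(s), p(s))$ for the solution of Hamilton's equations \eqref{hamesteq} whose value at $s = t$ is $\delta(t)$, the state reached after evolving for a time $\varepsilon$ is $\delta(t+\varepsilon)$. The reversing procedure described just before the theorem is to flip the momentum, evolve for another interval of length $\varepsilon$, and flip the momentum a final time; the goal is to show this composite map returns the system to $\delta(t)$.

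The central step is to define the candidate return trajectory $\hat\delta(s) = (\hat q(s), \hat p(s))$ on $s \in [0,\varepsilon]$ by $\hat q(s) = q(t+\varepsilon - s)$ and $\hat p(s) = -p(t+\varepsilon - s)$, and to check that it again solves Hamilton's equations \eqref{hamiltonianeq}. Differentiating with the chain rule produces a factor $-1$ from the reversed time argument, which I would combine with the two parity identities obtained by differentiating $H(q,-p) = H(q,p)$: namely $\partial H/\partial p$ is odd in $p$, while $\partial H/\partial q$ is even in $p$. For the explicit form these read $\partial H/\partial p = p/m$ and $\partial H/\partial q = \partial U/\partial q$, the latter being independent of $p$. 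Substituting $q(t+\varepsilon-s) = \hat q(s)$ and $p(t+\varepsilon-s) = -\hat p(s)$, the sign from time reversal and the sign from the parity of $\partial H/\partial p$ cancel in the $\hat q$ equation, while in the $\hat p$ equation the time-reversal sign is absorbed because $\partial H/\partial q$ carries no $p$ dependence; hence $\hat\delta$ satisfies the same first-order system as $\delta$.

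It then remains to read off the boundary data. At $s = 0$ the reversed trajectory starts at $\hat\delta(0) = (q(t+\varepsilon), -p(t+\varepsilon))$, which is exactly $\delta(t+\varepsilon)$ with its momentum reversed; at $s = \varepsilon$ it ends at $\hat\delta(\varepsilon) = (q(t), -p(t))$. Flipping the momentum a second time yields $(q(t), p(t)) = \delta(t)$, so the evolution-then-flip operation is indeed the inverse of the forward flow and reversibility follows; uniqueness of solutions to the initial-value problem for \eqref{hamiltonianeq} guarantees that $\hat\delta$ is the genuine Hamiltonian flow of its own starting state. The main obstacle is the careful bookkeeping of signs, tracking the chain-rule factor $-1$ from time reversal together with the parity in $p$ of each partial derivative, and confirming that these combine to reproduce both Hamilton equations exactly rather than with a spurious sign.
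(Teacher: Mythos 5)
Your proof is correct and rests on the same idea as the paper's: the momentum flip $p \mapsto -p$ composed with the time reversal $t \mapsto -t$ is a symmetry of Hamilton's equations \eqref{hamesteq}. The paper stops after formally rewriting the flipped equations in terms of $d(-t)$; your version is in fact more complete, since you explicitly construct the return trajectory $\hat\delta(s)=\left(q(t+\varepsilon-s),\,-p(t+\varepsilon-s)\right)$, track the sign from the chain rule against the parity of $\partial H/\partial p$ and $\partial H/\partial q$ in $p$, verify the endpoint data, and invoke uniqueness of solutions of the initial-value problem to conclude.
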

 \begin{proof}  

By changing the sign of $p$, the Hamiltonian equations become
\begin{eqnarray}
\frac{p_{j}}{m}=-\frac{dq_{j}}{dt} 
\quad \text{and} \quad 
\frac{\partial U(q) }{\partial q_{j}} = \frac{dp_{j}}{dt} 
\quad \text{for} \quad
j \in \left \{1,2,...,n\right \}.
\end{eqnarray}
Since $ -\frac{dq_{j}}{dt} = \frac{dq_{j}}{d(-t)}  $   and   $\frac{dp_{j}}{dt} = -\frac{dp_{j}}{d(-t)}$, it follows that 
\begin{eqnarray}
\frac{p_{j}}{m}=\frac{dq_{j}}{d(-t)} \hspace{5mm} \text{and} \hspace{5mm} \frac{\partial    U\left ( q \right ) }{\partial q_{j}} = -\frac{dp_{j}}{d(-t)}, \hspace{3mm} 
 \end{eqnarray}
which proves that by considering  $H(q,-p)$ instead of  $H(q,p)$, 
one obtains the time-reversed Hamiltonian dynamics, i.e., with the transformation $t\rightarrow -t$.
\end{proof}

Obtaining the inverse dynamics by changing the sign of the momenta in  Hamilton's equations is helpful in the HMC method because it makes it easier to obtain the inverse trajectories in the algorithm. Attaining reversibility is crucial because it enables the Markov chain generated by the HMC algorithm to satisfy the detailed balance equations presented in Section \ref{sec_ebd} and also permits the introduction of new states for the Markov chain that do not cancel the acceptance and rejection rate, as explained in Section \ref{se_construc}.

Another important property of HMC is that the Hamiltonian dynamics preserve the Hamiltonian $H(q, p)$ constant, as stated next and proved in \cite{hand}.

\begin{theorem} \label{teo_pfel}
The value of the Hamiltonian  $H\left(q,p\right)$ remains invariant on the trajectories of the Hamiltonian dynamics. 
\end{theorem}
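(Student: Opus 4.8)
The plan is to show invariance by computing the total time derivative of $H$ along a trajectory $\delta(t)$ that solves Hamilton's equations and verifying that it vanishes. Since the Hamiltonian of interest \eqref{hamfinfin} carries no explicit dependence on $t$ (it is written $H(q,p)$), the only time dependence enters through the curves $q_j(t)$ and $p_j(t)$. Applying the multivariate chain rule therefore gives
\begin{equation}
\frac{dH}{dt} = \sum_{j=1}^{n}\left( \frac{\partial H}{\partial q_j}\frac{dq_j}{dt} + \frac{\partial H}{\partial p_j}\frac{dp_j}{dt}\right).
\end{equation}

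First I would substitute Hamilton's equations \eqref{hamiltonianeq}, namely $\partial H/\partial q_j = -\,dp_j/dt$ and $\partial H/\partial p_j = dq_j/dt$, into each summand. This replaces the spatial and momentum partial derivatives of $H$ by the time derivative of the conjugate variable, so that each index $j$ contributes two terms equal in magnitude but opposite in sign. The next step is simply to read off the cancellation: each summand collapses to $-\frac{dp_j}{dt}\frac{dq_j}{dt} + \frac{dq_j}{dt}\frac{dp_j}{dt} = 0$, whence the whole sum is zero and $\frac{dH}{dt} = 0$. Because the derivative of $H$ along any solution trajectory is identically zero, $H$ must be constant on that trajectory, which is exactly the asserted invariance.

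I do not expect a genuine obstacle here; the work is almost entirely bookkeeping. The one point demanding care is the distinction between the total derivative $dH/dt$, which is taken along the trajectory, and the partial derivatives that appear in Hamilton's equations, together with the justification that the explicit-time term $\partial H/\partial t$ is absent precisely because the Hamiltonian under consideration is autonomous. Indeed, if one repeated the computation for the general time-dependent Hamiltonian \eqref{hamilt1}, the same cancellation would occur among the $q_j$ and $p_j$ terms but a residual $\partial H/\partial t$ would survive; thus the invariance holds exactly when $H$ has no explicit dependence on $t$, as is the case for \eqref{hamfinfin}.
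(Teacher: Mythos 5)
Your argument is correct and is exactly the standard chain-rule computation that the paper defers to in its cited reference \cite{hand}: write $\frac{dH}{dt}=\sum_j\left(\frac{\partial H}{\partial q_j}\frac{dq_j}{dt}+\frac{\partial H}{\partial p_j}\frac{dp_j}{dt}\right)$, substitute Hamilton's equations \eqref{hamiltonianeq}, and observe the termwise cancellation. Your added remark that the conclusion requires the Hamiltonian to be autonomous (no explicit $t$-dependence, as in \eqref{hamfinfin}) is a correct and worthwhile precision.
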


This result implies that the acceptance probability in the HMC is one and that the points on the trajectories of the Hamiltonian dynamics have the same associated density according to the Boltzmann distribution, as mentioned in the Section \ref{se_construc}. Further, Hamiltonian dynamics are volume-preserving in phase space. This is known as Liouville's theorem, presented next and proved in \cite{hand}.

\begin{theorem}[Liouville's theorem] \label{teo_lio}
The volume of a region $\omega$ in the phase space is conserved if the points on its boundary $d\left(\omega\right)$ move according to Hamilton's equations.
\end{theorem}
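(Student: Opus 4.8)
The plan is to show that the Hamiltonian flow preserves phase-space volume by verifying that the vector field generating the flow is divergence-free, and then invoking the divergence theorem to conclude that $\mathrm{Vol}(\omega)$ is constant in time. First I would view Hamilton's equations \eqref{hamiltonianeq} as prescribing, at each point of the phase space $Q \subseteq \mathbb{R}^{2n}$, a velocity
\[
V = \left(\frac{dq_{1}}{dt},\dots,\frac{dq_{n}}{dt},\frac{dp_{1}}{dt},\dots,\frac{dp_{n}}{dt}\right) = \left(\frac{\partial H}{\partial p_{1}},\dots,\frac{\partial H}{\partial p_{n}},-\frac{\partial H}{\partial q_{1}},\dots,-\frac{\partial H}{\partial q_{n}}\right),
\]
so that each point, and in particular each point of the boundary $d(\omega)$, is carried along the trajectories $\delta(t)$ of the dynamics.

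The key computation is the divergence of $V$ taken over all $2n$ coordinates of $Q$. Substituting the components of $V$ from Hamilton's equations \eqref{hamiltonianeq} gives
\[
\nabla\cdot V = \sum_{j=1}^{n}\left(\frac{\partial}{\partial q_{j}}\frac{\partial H}{\partial p_{j}} - \frac{\partial}{\partial p_{j}}\frac{\partial H}{\partial q_{j}}\right) = \sum_{j=1}^{n}\left(\frac{\partial^{2} H}{\partial q_{j}\,\partial p_{j}} - \frac{\partial^{2} H}{\partial p_{j}\,\partial q_{j}}\right) = 0,
\]
where the cancellation is just the equality of mixed second partials of $H$ (Schwarz's theorem), valid whenever $H$ is twice continuously differentiable, as it is for the Hamiltonian \eqref{hamfinfin}.

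Finally I would connect $\nabla \cdot V = 0$ to the conservation of volume. Writing $\omega_{t}$ for the region obtained by flowing $\omega$ for time $t$ and $\mathrm{Vol}(\omega_{t}) = \int_{\omega_{t}} dV$, the Reynolds transport theorem (equivalently, the divergence theorem applied to the flux of $V$ across the moving boundary) yields
\[
\frac{d}{dt}\,\mathrm{Vol}(\omega_{t}) = \oint_{d(\omega_{t})} V\cdot\hat{n}\,dS = \int_{\omega_{t}}(\nabla\cdot V)\,dV = 0,
\]
so $\mathrm{Vol}(\omega_{t})$ is independent of $t$. I expect this last step to be the main obstacle, since it is where the geometric content lives: one must justify that the time derivative of the volume equals the boundary flux of the generating field. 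An equivalent and perhaps more elementary route is to compute the Jacobian determinant of the infinitesimal flow map $(q,p)\mapsto(q+\tfrac{\partial H}{\partial p}\,dt,\;p-\tfrac{\partial H}{\partial q}\,dt)$ and check that it equals $1+O(dt^{2})$, its first-order term being exactly $\nabla\cdot V$; the change-of-variables formula for integrals then shows the volume element is unchanged to first order, and composing infinitesimal steps preserves volume for all $t$. Either way, the divergence computation itself is routine once the mixed partials are known to commute.
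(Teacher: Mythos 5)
Your argument is correct: the zero-divergence computation via equality of mixed partials, followed by the transport-theorem (or equivalently the Jacobian-determinant) step, is the standard proof of Liouville's theorem and is precisely the one given in the reference \cite{hand} to which the paper defers, since the paper itself states the theorem without reproducing a proof. The only point worth tightening is the final ``composing infinitesimal steps'' remark, which is made rigorous by Abel's identity $\frac{d}{dt}\det J_{t}=(\nabla\cdot V)\det J_{t}$, giving $\det J_{t}\equiv 1$ exactly rather than only to first order.
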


The importance of volume preservation for the HMC algorithm lies in the fact that if it is not preserved, one would have to compute the determinant of the Jacobian matrix of the transformation each time a new state is proposed, in order to modify the acceptance rate, which would be computationally expensive~\cite{hand}. 

Finally, it is important to mention the  symplectic property of Hamiltonian dynamics. To define this property, it is essential to consider the definitions presented in \eqref{mat_sim} and \eqref{tra_sim}.

\begin{definition}[Symplectic Matrix] \label{mat_sim}
We say that a matrix $M$ is symplectic if  
\[
M^{t}JM=J, 
\quad\text{where}\quad J=\begin{pmatrix}0_{n\times n} & I_{n\times n}\\
-I_{n\times n} & 0_{n\times n}
\end{pmatrix}.
\]
\end{definition} 

\begin{definition}[Symplectic Transformation]\label{tra_sim}
A transformation $\varphi$ is symplectic if its Jacobian matrix $M$ is symplectic.
\end{definition} 

The following theorem, due to Henri Poincar\'e, establishes that Hamiltonian dynamics are symplectic if the Hamiltonian $H\left(q,p\right)$ has continuous partial derivatives of order two. Its proof can be found in \cite{simp}.

\begin{theorem}\label{ponca}
If $H\left(q,p\right)$  is a Hamiltonian with continuous second partial derivatives, then the transformation implied by the Hamilton's equations associated with  $H\left(q,p\right)$ at time $t$, is a symplectic transformation. 
\end{theorem}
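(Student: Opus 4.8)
The plan is to recast Hamilton's equations \eqref{hamiltonianeq} in the compact form $\frac{dz}{dt} = J\,\nabla H(z)$, where $z = (q_1,\dots,q_n,p_1,\dots,p_n)$ collects the generalized coordinates and momenta, $\nabla H$ is the gradient of the Hamiltonian, and $J$ is the symplectic matrix of Definition \ref{mat_sim}; multiplying out the two blocks of this equation reproduces precisely $\frac{dq_j}{dt} = \frac{\partial H}{\partial p_j}$ and $\frac{dp_j}{dt} = -\frac{\partial H}{\partial q_j}$. Denoting by $\varphi_t$ the flow that sends an initial state $z_0$ to the solution $z(t)$ at time $t$, the object of interest is its Jacobian $M(t) = \frac{\partial \varphi_t(z_0)}{\partial z_0}$, and the goal is to establish $M(t)^t J M(t) = J$ for every $t$, which by Definition \ref{tra_sim} is exactly the assertion that $\varphi_t$ is symplectic.

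First I would record the initial condition $M(0) = I$, the $2n \times 2n$ identity matrix, since $\varphi_0$ is the identity map; thus the claimed identity holds trivially at $t = 0$. Next I would derive the \emph{variational equation} governing $M(t)$: differentiating $\frac{dz}{dt} = J\,\nabla H(z)$ with respect to the initial data $z_0$ and interchanging the $t$- and $z_0$-derivatives yields $\frac{d}{dt}M(t) = J\,S(t)\,M(t)$, where $S(t)$ is the Hessian matrix of $H$ evaluated along the trajectory. Here the hypothesis that $H$ has continuous second partial derivatives is indispensable: it guarantees both that the vector field $J\nabla H$ is $C^1$ (so the flow depends differentiably on $z_0$ and the interchange of derivatives is legitimate) and, through Schwarz's theorem on the equality of mixed partials, that $S(t)$ is symmetric, $S^t = S$.

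With the variational equation in hand, the core computation is to show that $W(t) := M(t)^t J M(t)$ is constant. Differentiating and substituting $\dot M = J S M$ gives
\begin{equation}
\frac{dW}{dt} = M^t\big( S^t J^t J + J J S \big) M.
\end{equation}
Using the elementary identities $J^t = -J$ and $J^2 = -I$ (both read directly off the block form of $J$), together with $S^t = S$, the first term becomes $S$ and the second becomes $-S$, so the bracket collapses to $S - S = 0$ and $\frac{dW}{dt} = 0$. Hence $W(t) = W(0) = I^t J I = J$ for all $t$, which is the symplecticity of $M(t)$.

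The main obstacle is the rigorous justification of the variational equation rather than the algebra that follows it. Establishing $\dot M = J S M$ rests on smooth dependence of the flow on its initial conditions and on interchanging the two differentiations, both of which follow precisely from the $C^2$-regularity of $H$ assumed in the statement; once this is secured, verifying that $W$ is conserved is a short and essentially mechanical calculation built on the defining properties of $J$ and the symmetry of the Hessian.
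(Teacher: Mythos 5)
Your proof is correct: the reduction of Hamilton's equations to $\frac{dz}{dt}=J\,\nabla H(z)$, the variational equation $\dot M = JSM$ with $S$ the Hessian, and the conservation of $M^{t}JM$ using $J^{t}=-J$, $J^{2}=-I$, and the symmetry of $S$ (which is where the $C^{2}$ hypothesis enters) constitute the standard argument. The paper does not supply its own proof of Theorem \ref{ponca} but defers to the cited reference, and the argument given there is essentially the one you have written, so there is nothing to flag.
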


The volume-preserving property of Hamiltonian dynamics is a consequence of their symplectic nature. This property arises because the flow generated by Hamilton's equations is symplectic, as ensured by Theorem \ref{ponca}. Such transformations are volume-preserving. Moreover, the symplectic property is important because symplectic numerical integrators provide more accurate approximations to the solutions of Hamilton’s equations, which are essential for the implementation of the HMC algorithm.

\section{Building the HMC algorithm}\label{se_construc}

The goal of the HMC method is to simulate a target distribution on $\mathbb{R}^{d}$. We assume that this target distribution has a density function $S^*(q)$, with respect to Lebesgue measure in $\mathbb{R}^{d}$. To use this algorithm, we should at least be able to evaluate $-\log(S(q))$, where $S$ is proportional to $S^*$.  Additionally, the partial derivatives of $-\log(S(q))$ with respect to $q = (q_1, q_2, \dots, q_d)$ are required to exist, and it must be possible to evaluate them. In other words, a computational method should be available to calculate both $-\log(S(q))$ and its partial derivatives for all $q \in a \subset \operatorname{supp}(S^*(q)) \subset \mathbb{R}^d$.

To adapt the \emph{Boltzmann} distribution to HMC, we need to consider a system of particles with $d$ generalized coordinates and $d$ generalized momenta, denoted by $q=\left(q_{1},q_{2},. .,q_{d}\right)$ and $p=\left(p_{1},p_{2},...,p_{d}\right)$, respectively, such that the density associated with a point  $\left(q,p\right)\in \mathbb{R}^{2d}$  is given by
\begin{equation}
\label{bolt} 
B(q,p)=\frac{1}{z}\exp\left(\frac{-E\left(q,p\right)}{c}\right).
\end{equation}
In addition, we need to use the Hamiltonian $H(q,p) = K(p)+U(q)$ introduced in Section \ref{sec_ecuaham}, where the functions $K$ and $U$ are the kinetic and potential energy, respectively, as the energy function $E$. It is also necessary to consider the temperature $\psi$ as $\frac{1}{k}$ to avoid the constant $c$. Then, \eqref{bolt} can be rewritten as
\begin{align}
\nonumber
B(q,p) 
&= \frac{1}{z} \exp(-H(q,p))
= \frac{1}{z} \exp(-K(p)) \exp(-U(q)) \\
&= \frac{1}{z} \exp\left(-\sum_{i=1}^{d} \frac{p_{i}^{2}}{2m_{i}}\right) \exp(-U(q)).
\label{prop2}
\end{align}

Note that the factor in front of $\exp(-U(q))$ in \eqref{prop2} is the kernel of the multivariate normal distribution, given by 
\begin{equation}
\label{normal}
\begin{gathered}
f(x\mid \mu ,\Sigma)
=\frac{1}{\left(2\pi\right)^{\frac{n}{2}}\det\left(\Sigma\right)^{\frac{1}{2}}}\exp\left(-\frac{1}{2}\left(x-\mu\right)^{t}\Sigma^{-1}\left(x-\mu\right)\right),\\
\text{where}\quad 
x\in \mathbb{R}^{d},\enskip
\mu=\left(0,0,...,0\right)^{t}
\enskip\text{and}\enskip
\Sigma=\mathrm{diag}(m_1,\ldots,m_d).
\end{gathered}
\end{equation}

Let $w = (2\pi)^{\frac{n}{2}} \det(\Sigma)^{\frac{1}{2}}$ denote the normalization constant of the density $f$. It then follows that 
\begin{equation}
\label{400}\exp\left(-\sum_{i=1}{d}\frac{p_{i}^{2}}{2m_{i}}\right) 
=w  f\left(p\right).
\end{equation}

From  \eqref{400} and    \eqref{prop2} it follows that 
 \begin{eqnarray}
 B\left(q,p\right)&=& \frac{1}{z}\exp\left(-\stackrel[i=1]{d}{\sum}\frac{p_{i}^{2}}{2m_{i}}\right)\exp\left(-U\left(q	\right)\right)\nonumber\\
 &=&\label{raaa}  \frac{w}{z} f\left(p\right)  \exp\left(-U\left(q	\right)\right).
\end{eqnarray}
In addition, we must consider to the potential energy $U$  as 
\begin{equation}
\label{ju} U\left(q\right)=-\log\left(S\left(q\right)\right),
\end{equation}
so $B\left(q,p\right)$  can be rewritten as
\begin{equation}
B(q,p)= \frac{w}{z} f\left(p\right)  \exp\left(\log\left(S\left(q\right)\right)\right)
=\label{low}  \frac{w}{z}f\left(p\right)S\left(q\right).
\end{equation}

By constructing $B\left(q,p\right)$ in this way, its marginal density with respect to $q$ will be the target density $S'$, since
\begin{align}
\sideset{}{_{\mathbb{R}^d}}\int B\left(q,p\right)dp
&=\sideset{}{_{\mathbb{R}^d}}\int \dfrac{w}{z}f\left(p\right)S\left(q\right)dp
=\dfrac{w}{z} S\left(q\right) \sideset{}{_{\mathbb{R}^d}}\int f\left(p\right)dp\nonumber\\
&=\dfrac{w}{z} S\left(q\right)
= S'\left(q\right).
\label{resulta11}
\end{align}

Equality \eqref{resulta11} follows from the fact that $\frac{w}{z}$ is the normalization constant of the function $S$, and since $S$ is proportional to the target density $S'$, it must hold that $S'=\frac{w}{z}S$. Note also that the marginal density of $B(q,p)$ with respect to $p$ is the multivariate normal density defined in \eqref{normal}:
\begin{equation}
\sideset{}{_{\mathbb{R}^{d}}}\int B(q,p)dq
=\sideset{}{_{\mathbb{R}^{d}}}\int\frac{w}{z}f(p)S(q)dq
=f(p)\sideset{}{_{\mathbb{R}^{d}}}\int\frac{w}{z}S(q)dq
= f(p).
\label{resultt}
\end{equation}

The HMC algorithm constructs a Markov chain $\{q_{i},p_{i}:i=1,\ldots,N-1\}$ using Algorithm~\ref{algo}, whose stationary distribution has density $B(q,p)$, ensuring that the set  $\{ q_{i}:i=1,\ldots,N-1\}$ has the target distribution. The first state of the chain generated with the HMC is the point $(q^{(0)}, p^{(0)})$, where $q^{(0)}$ is a starting value that must belong to the support of $S'$, the density of the target distribution, and $p^{(0)}$ is a simulation of the normal distribution $N(\mu,\Sigma)$, whose density is given in \eqref{normal}.
 
Then we have to solve or approximate the solutions of Hamilton's equations 
\begin{equation}
    \label{hamest1}\frac{p_{j}}{m}=\frac{dq_{j}}{dt} \hspace{5mm} \text{and} \hspace{5mm} \frac{\partial U\left ( q \right ) }{\partial q_{j}} = \frac{-dp_{j}}{dt} \hspace{3mm} \text{ for } j \hspace{3mm}\in \left \{1,2,...,d\right \},
\end{equation}
at a given time $T$, considering as starting point $\left(q^{(i-1)}, p^{(i-1)}\right)$. The value $T$ is a tuning parameter. Its value depends on the target distribution to be simulated. It must be large enough so that the proposals are sufficiently far from the current state and the support of the target distribution can be efficiently explored.

The proposal for the state $(q^{(i)}, p^{(i)})$ of the chain is the point $(q^*, - p^*)$, where $(q^*, p^*)$ is the approximation or solution of Equation \eqref{hamest1}. The negative sign in the proposal is essential since, if it is not used,  the acceptance rate used in the Metropolis Hasting algorithm also required for the HMC algorithm always becomes zero. To prove this, note that according to this ratio, the proposal must be accepted with probability $\rho((q_{i-1},p_{i-1}),(q_{i},p_{i}))=\min(1,r)$, where
\begin{align}
r&=\frac{f(p_{i})S'(q_{i})}{f(p_{i-1})S'(q_{i-1})}\frac{\mathbb{P}(q_{i-1}, p_{i-1}\mid q_{i},p_{i})}{\mathbb{P}(q_{i},p_{i}\mid q_{i-1},p_{i-1})}\nonumber\\
&=\label{razon}  \frac{f(p_{i})S(q_{i})}{f(p_{i-1})S(q_{i-1})}\frac{\mathbb{P}(q_{i-1}, p_{i-1}\mid q_{i},p_{i})}{\mathbb{P}(q_{i},p_{i}\mid q_{i-1},p_{i-1})}.
\end{align}

Equality \eqref{razon} follows from the fact that the density $S\propto S'$. The ratio in \eqref{razon} cancels out because the probability of reaching the point $(q_{i-1},p_{i-1})$ at time $T$ starting from the point $(q_{i},p_{i})$ at time zero, denoted by $\mathbb{P}(q_{i-1},p_{i-1}\mid q_{i}, p_{i})$, is zero. Indeed, this is the case because, to return to the same state, we have to get a certain $p$ value, if it exists, from the normal distribution $N(\mu,\Sigma)$. Since $p$ is a numerical quantity, its value is obtained with zero probability.

By proposing  $(q^*, - p^*)$ instead of $(q^*, p^*)$, the ratio in \eqref{razon} does not cancel out since, as explained in Theorem \ref{teorever}, the reversibility of the Hamiltonian dynamics implied by the  Hamiltonian is achieved by changing the generalized momentum $p$ for $-p$. Thus, the chain will deterministically reach the state $(q_{i-1},p_{i-1})$ starting from $(q_{i},-p_{i})$, i.e., $\mathbb{P}(q_{i-1},p_{i-1}\mid q_{i},-p_{i})=1$. Moreover, since $\mathbb{P}(q_{i},-p_{i}\mid q_{i-1},p_{i-1})=1$, because this deterministic transition arises from Hamilton's equations, the ratio in \eqref{razon} reduces to:
\begin{align}
r&=\frac{f(-p_{i})S(q_{i})}{f(p_{i-1})S(q_{i-1})}\frac{\mathbb{P}(q_{i-1}, p_{i-1}\mid q_{i},-p_{i})}{\mathbb{P}(q_{i},-p_{i}\mid q_{i-1},p_{i-1})}\nonumber\\
 &=\frac{f(-p_{i})S(q_{i})}{f(p_{i-1})S(q_{i-1})}
=\frac{f(p_{i})S(q_{i})}{f(p_{i-1})S(q_{i-1})}.
\label{razmas2}
\end{align} 
The last equality in \eqref{razmas2} follows from the fact that $f$ is a symmetric density at zero, and hence $f(-p_{i})=f(p_{i})$.

We will simplify the ratio in \eqref{razmas2} in order to reduce the computational cost associated with the operations involved in calculating the acceptance and rejection rates in each iteration. From Equation \eqref{prop2}, $B(q,p)= \frac{1}{z} \exp(-H(q,p))$, and by \eqref{low}, we have that $B(q,p)= \frac{w}{z} f(p)S(q)$. Thus, it follows that
\begin{align}
\nonumber
r&=\frac{f(p_{i})S(q_{i})}{f(p_{i-1})S(q_{i-1})}
=\frac{\frac{w}{z}(p_{i})S(q_{i})}{\frac{w}{z}f(p_{i-1})S(q_{i-1})}\\
&=\frac{\exp(-H(q_{i},p_{i}))}{\exp(-H(q_{i-1},p_{i-1}))}
=\label{razon12} \exp(H(q_{i-1},p_{i-1})-H(q_{i},p_{i})).
\end{align}

Ultimately, a value $u$ is generated from a uniform distribution $U(0,1)$. 
In the event that $u<min(1,r)$, the point $(q_{i},p_{i})$ is accepted as the subsequent state of the chain. Otherwise, the current state is assigned as the next state of the chain. This procedure is summarized in Algorithm \ref{algo}.

In this work, we will use the Leapfrog algorithm presented in \cite{nume} to approximate the solution of the system of equations \eqref{hamest1}. The Leapfrog algorithm is commonly used in HMC because it shares the symplectic property of Hamiltonian dynamics, as introduced in Section \ref{sec_ecuaham}. As demonstrated in reference \cite{nume}, the Leapfrog method is proven to be a symplectic method when applied to the solution of Hamilton's equations of the form \eqref{hamest1}. The advantage of utilising the Leapfrog algorithm is that it preserves volume and, due to its symmetry, is also reversible. Consequently, the Leapfrog algorithm exhibits a preferable global and local error order in comparison to alternative methods, as shown in Table \ref{tabla_errores} consulted in \cite{hand}.

\begin{table}
\caption{\label{tabla_errores}Error order of numerical methods.}
\begin{tabular}{@{}lccccc@{}}
\hline 
 & Euler & Symplectic Euler  & Leapfrog \\ 
 \hline 
Local error   & $\epsilon^{2}$ &  $\epsilon^{2}$ & $\epsilon^{3}$\\  
Global error  & $\epsilon$  & $\epsilon$ & $\epsilon^{2}$ \\ 
 \hline
\end{tabular}
\end{table}

In the context of the Leapfrog algorithm, the solution of the system of equations \eqref{hamest1} is approximated by defining $T$ as $L\epsilon$, where $L$ represents the number of steps of length $ \epsilon$ taken by the Leapfrog algorithm. The parameter $\epsilon$  serves to regulate the degree of precision with which the leapfrog algorithm approximates the trajectory of the Hamiltonian dynamics. As detailed in Section \ref{tasahh}, an appropriate approximation of the solution to the Hamiltonian equation is crucial for achieving a high acceptance rate in the HMC algorithm.

The value of the parameter $L$ must be selected based on the chosen value of the parameter $ \epsilon$. If a small value of $ \epsilon$ is selected, the value of $L$ should be large enough to ensure that the length of the path traversed by the chain in a single iteration, represented by the value $L\epsilon$, is sufficient to efficiently explore the target distribution and avoid autocorrelation in the chain. Choosing optimal values for $\epsilon$ and $L$ is a delicate balancing act: a step size too large results in low acceptance rates due to discretization errors, while a step size too small leads to inefficient, random-walk-like behavior.

The challenge in selecting the tuning parameters $L$ and $ \epsilon$ lies in the fact that, in general, the optimal parameters vary before and after convergence to the target distribution. Furthermore, it is possible that the optimal tuning parameters may differ for different regions of the target distribution. For the basic HMC, \cite{hand} proposes choosing $L$ through trial and error. As a starting point, the value of $L=100$ is suggested. If, after simulating values from the target distribution, non-significant autocorrelations are achieved, a smaller $L$ can be considered; otherwise, a larger one should be chosen. On the other hand, it is recommended that $\epsilon$ be chosen randomly within a small interval.

While these position-dependent sensitivities present a substantial bottleneck for standard HMC, modern algorithmic extensions were developed explicitly to resolve these limitations. As detailed in Section \ref{sec:modern_variants}, the Dual Averaging scheme adaptively optimizes the step size $\epsilon$ to achieve a target acceptance rate, eliminating the need for manual step-size tuning. Furthermore, algorithms such as NUTS dynamically determine the trajectory length $L$ at each individual iteration based on the local geometry of the target density. This allows the algorithm to take short trajectories in regions of high curvature and long trajectories in flat regions, effectively automating the hyperparameter selection that standard HMC struggles with.

\begin{algorithm}
\caption{Hamiltonian Monte Carlo}
\label{algo}
\begin{algorithmic}
\STATE\textbf{Input:} potential $U(q)=-\log(S(q))$, where $S$ is proportional to the target density  $S'$, initial point $q[0]=q^{(0)}$, length of the trajectory $T$, chain size $N$.
\FOR{$i = 1,\ldots, N-1$}
\STATE Sample $p^{(i-1)}\sim N(0,\Sigma)$.
\STATE Let $(q^{*},p^{*})$ be the solution (or an approximation) to Hamilton's equations 
\begin{equation*}
\frac{\partial H(q,p)}{\partial p_{j}}=\frac{-dq_{j}}{dt}, \quad 
\frac{\partial H(q,p)}{\partial q_{j}}=\frac{dp_{j}}{dt}, \quad j\in\{1,\ldots,\mathrm{length}(q)\},
\end{equation*}
with initial point $(q^{(i-1)},p^{(i-1)})$ at a time $T$.

\STATE Sample $u\sim U\left(0,1\right)$

\IF{$u\leq \min(1,\exp(H(q^{(i-1)},p^{(i-1)})-H(q^{*},-p^{*})))$}
\STATE Set $q[i]=q^{*}$ 
\ELSE
\STATE Set $q[i]=q^{(i-1)}$ 
\ENDIF
\ENDFOR
\STATE \textbf{return} $\{ q[0], q[1],\ldots,  q[N-1]\}$ 
\end{algorithmic}
\end{algorithm}

\subsection{Theoretical results}

This section presents theoretical results regarding the acceptance rate and the theory of Markov chains that support the HMC algorithm.

\subsubsection{Acceptance probability} \label{tasahh}

The proposals in the HMC method are theoretically accepted with probability one.  To support this assertion, one must recall that in Section \ref{se_construc}, it was demonstrated that the probability of acceptance is given by the $\min(1,r)$, where 
\begin{equation}
r = \exp(H(q_{i-1},p_{i-1})-H(q_{i},p_{i})).
\end{equation}

Additionally, according to Section \ref{sec_ecuaham}, Hamiltonian dynamics keep the Hamiltonian invariant, so $H(q_{i}, p_{i}) = H(q_{i+1}, p_{i+1})$, and therefore
\begin{equation}
\label{unoh}
r=\exp(H(q_{i-1},p_{i-1})-H(q_{i},p_{i}))
=\exp(0)
=1.
\end{equation}
Hence, the probability of accepting a new state for the chain is $\min(1, r) = 1$. 

It is essential to note that, in practice, proposals are not accepted with probability one. This is because the Hamiltonian does not remain constant as a result of the inherent inaccuracies associated with the numerical methods employed to approximate the Hamiltonian dynamics (i.e., to solve Hamilton's equations). In fact, the higher the accuracy of the approximation method used, the closer $H(q_{i-1}, p_{i-1}) - H(q_{i}, p_{i})$ is to zero. Consequently, the value of $r$ in \eqref{unoh} will be closer to one. Therefore, the greater the precision of the numerical method employed, the closer the acceptance rate will be to one.

Although, theoretically, all proposals are accepted, this is suboptimal. In \cite{nota2} it was shown that when the dimension of the target distribution tends to infinity (and under other assumptions imposed on the Leapfrog algorithm), the optimal acceptance rate in HMC is $65.1 \%$. 

\subsubsection{Detailed balance equations and stationary distribution}\label{sec_ebd}

One of the conditions that allows the simulation of a target distribution with MCMC methods is that the chain generated exhibits a stationary distribution that is equal to the target distribution.

In \cite{hand}, it is shown that the function $B(q,p)$, as defined in Equation \eqref{prop2}, satisfies the detailed balance equations, defined in \cite{advanced}, with respect to the Markov chain $\{ (q_{i},p_{i}):i=0,1,\ldots,N-1\}$ generated by the HMC algorithm. This implies that the stationary distribution of the chain has a density function $B(q,p)$, as supported by the next result (see \cite{advanced} for a proof).

\begin{theorem} \label{teo_imp}
Let $\{X_{n} : n = 0, \dots, N-1\}$ be a Markov chain on $\mathbb{X}$ with a transition kernel $P(dy \mid x)$ and $\pi(dy)$ a probability measure on $(\mathbb{X}, \mathcal{B})$. Let $f$ denote the density function of the distribution $\pi(dy)$, and let $Q(y \mid x)$ represent the conditional density function of the kernel $P(dy \mid x)$. If $f$ satisfies the detailed balance equations, then $\pi(dy)$ is the stationary distribution of $\{X_{n} : n = 0, \dots, N-1\}$.
\end{theorem}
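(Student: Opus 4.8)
The plan is to show that $\pi$ is invariant under the transition kernel $P$, which by definition means $\pi(A) = \int_{\mathbb{X}} P(A \mid x)\,\pi(dx)$ for every measurable set $A \in \mathcal{B}$. Written in terms of densities, it suffices to prove the pointwise identity $f(y) = \int_{\mathbb{X}} Q(y \mid x) f(x)\, dx$ for (almost) every $y$, since integrating both sides over $A$ then recovers the invariance equation. I would therefore take $\int_{\mathbb{X}} Q(y \mid x) f(x)\, dx$ as the starting point and aim to reduce it to $f(y)$.

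The key step is a direct substitution of the detailed balance equation, which in density form reads $f(x) Q(y \mid x) = f(y) Q(x \mid y)$ for all $x, y \in \mathbb{X}$. Applying this inside the integrand converts $\int_{\mathbb{X}} Q(y \mid x) f(x)\, dx$ into $\int_{\mathbb{X}} Q(x \mid y) f(y)\, dx$. Since $f(y)$ does not depend on the integration variable $x$, it factors out, leaving $f(y) \int_{\mathbb{X}} Q(x \mid y)\, dx$. Finally, because $Q(\cdot \mid y)$ is a probability density (the conditional density of the kernel started at $y$), its integral over the whole state space equals one, so the expression collapses to $f(y)$. This establishes the invariance identity and hence that $\pi$ is stationary.

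The step I expect to require the most care is the passage from the kernel $P(dy \mid x)$ to a genuine conditional density $Q(y \mid x)$. In the Metropolis--Hastings setting relevant to HMC, the kernel is not absolutely continuous: rejection of a proposal produces a point mass at the current state, so $P(dy \mid x)$ carries an atomic component in addition to an absolutely continuous one. A fully rigorous argument would split the kernel as $P(dy \mid x) = Q(y \mid x)\, dy + \rho(x)\, \delta_{x}(dy)$, verify detailed balance separately for the continuous part, observe that the atomic part contributes symmetrically (since $\delta_{x}$ acts only when $x = y$), and then run the swap-and-integrate computation above on the continuous piece. Under the hypotheses as stated, where $Q$ is simply posited to be the conditional density satisfying detailed balance, this complication is suppressed and the proof reduces to the three-line calculation; but flagging the atomic part is where the genuine subtlety of the MCMC application resides.
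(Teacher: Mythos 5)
Your argument is the standard and correct one: substitute the detailed balance identity $f(x)Q(y\mid x)=f(y)Q(x\mid y)$ into $\int_{\mathbb{X}}Q(y\mid x)f(x)\,dx$, pull $f(y)$ out of the integral, and use that $Q(\cdot\mid y)$ integrates to one. The paper itself gives no proof of this theorem --- it is stated and deferred entirely to the cited reference --- so there is nothing in the text to compare against; on its own merits your calculation establishes the claim under the hypotheses as literally stated (namely, that the kernel admits a conditional density $Q$ satisfying detailed balance). Your closing caveat is the genuinely important point and is worth keeping: the HMC/Metropolis kernel is \emph{not} absolutely continuous, since rejection places an atom at the current state, so a fully rigorous treatment must decompose $P(dy\mid x)=Q(y\mid x)\,dy+\rho(x)\,\delta_{x}(dy)$ and check invariance for the two pieces separately (the atomic piece contributes $\rho(y)f(y)$ on both sides of the invariance equation, so it causes no harm, but it does mean the theorem as stated with a single density $Q$ does not literally apply to the chain the paper constructs). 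Flagging that mismatch between the theorem's hypotheses and the HMC kernel is a real contribution beyond the three-line calculation.
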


\subsubsection{Ergodicity of the chain}\label{sec_ergodica}

In Section \ref{sec_ebd} we discussed the stationary distribution of the Markov chain generated by the HMC algorithm, which has a density $B(q,p)$. However, this result is insufficient to guarantee that the simulated chain will converge to the target distribution. To ensure this convergence, the chain must satisfy the hypothesis of the ergodic theorem stated in \cite{kass} and recalled next.

\begin{theorem}[Ergodic theorem] \label{teo_erg}
Let $\{ X_{n}:n=0,1,2,...,N-1\}$ be an aperiodic and positive Harris recurrent Markov chain with state space $\mathbb{X}$, stationary distribution  $\pi(dx)$  on $(\mathbb{X} ,\mathcal{B} )$ and $g$ a real function.  If $\mathbb{E}_{\pi}(|g(x)|) < \infty  $, then 
\begin{equation}
\hat{\mu}_{g}\overset{a.s}{\rightarrow}\mathbb{E}_{\pi}(g(X)),
\end{equation}
where $\hat{\mu}_{g}=\frac{1}{N}\sum_{i=0}^{N-1}g(X_{i})$ and  $\mathbb{E}_{\pi}(g(X))=\int_{\mathbb{X}}g(x) \pi(dx)$.
\end{theorem}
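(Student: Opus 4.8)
The plan is to prove this strong law of large numbers for Harris recurrent chains via the regeneration (splitting) method, reducing the problem to the classical i.i.d.\ strong law. The first step is to exploit irreducibility and Harris recurrence to produce a minorization: there exist a small set $C$, an integer $m \ge 1$, a constant $\beta \in (0,1]$ and a probability measure $\nu$ on $\left(\mathbb{X}, \mathcal{B}\right)$ such that $P^{m}(x, A) \ge \beta\,\nu(A)$ for every $x \in C$ and $A \in \mathcal{B}$. With this in hand I would apply the Nummelin--Athreya--Ney splitting construction, enlarging the state space to $\mathbb{X} \times \{0,1\}$ so that the split chain possesses a genuine accessible atom $\alpha$; the aperiodicity hypothesis guarantees that the splitting can be carried out so that the atom is reached infinitely often without periodic obstruction.

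The second step is to introduce the successive regeneration times $0 \le \tau_{1} < \tau_{2} < \cdots$ at which the split chain enters $\alpha$. By Harris recurrence these times are almost surely finite and infinite in number, and the strong Markov property at $\alpha$ makes the excursions between consecutive visits independent and, for $k \ge 1$, identically distributed. Consequently both the cycle lengths $\ell_{k} = \tau_{k+1} - \tau_{k}$ and the block sums $S_{k} = \sum_{i=\tau_{k}+1}^{\tau_{k+1}} g\left(X_{i}\right)$ form i.i.d.\ sequences. Positive recurrence is precisely what forces the mean cycle length $\mathbb{E}\left[\ell_{k}\right]$ to be finite, while the assumption $\mathbb{E}_{\pi}\left(\left|g\right|\right) < \infty$ together with Kac's occupation formula yields $\mathbb{E}\left[\left|S_{k}\right|\right] < \infty$ and the key identity $\mathbb{E}_{\pi}\left(g\right) = \mathbb{E}\left[S_{k}\right] / \mathbb{E}\left[\ell_{k}\right]$.

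The third step assembles these pieces. Writing $R_{N}$ for the number of completed regeneration cycles before time $N$, I would decompose
\begin{equation*}
\frac{1}{N}\sum_{i=0}^{N-1} g\left(X_{i}\right) = \frac{R_{N}}{N}\cdot\frac{1}{R_{N}}\sum_{k=1}^{R_{N}} S_{k} + \frac{1}{N}\bigl(\text{initial and terminal incomplete blocks}\bigr).
\end{equation*}
The classical i.i.d.\ strong law applied to $\left(S_{k}\right)$ gives $R_{N}^{-1}\sum_{k \le R_{N}} S_{k} \to \mathbb{E}\left[S_{k}\right]$ almost surely, while the renewal strong law applied to $\left(\ell_{k}\right)$ gives $R_{N}/N \to 1/\mathbb{E}\left[\ell_{k}\right]$ almost surely; multiplying these limits reproduces $\mathbb{E}_{\pi}\left(g\right)$ by the identity above. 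The remaining task is to verify that the boundary contributions vanish after normalization, which follows because finiteness of $\mathbb{E}\left[\ell_{k}\right]$ forces the length of the cycle straddling time $N$ to be $o(N)$ almost surely.

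I expect the main obstacle to be the construction and justification of the regeneration structure on a general (uncountable) state space --- establishing the minorization for a small set and verifying the independence of the split-chain excursions --- since on $\mathbb{X}$ there need not be any naturally accessible atom. Once the i.i.d.\ block decomposition is secured, the analytic core reduces to two applications of the ordinary strong law, and the control of the incomplete end blocks is routine.
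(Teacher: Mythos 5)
The paper offers no proof of this theorem at all: it is quoted verbatim from the cited reference (\cite{kass}) and used as a black box to justify convergence of ergodic averages for the HMC chain, so there is no internal argument to compare yours against. Your regeneration sketch is the standard textbook route to this statement and is essentially correct: $\psi$-irreducibility (implied by Harris recurrence) yields a small set and a minorization $P^{m}(x,\cdot)\ge\beta\,\nu(\cdot)$ on it, the Nummelin--Athreya--Ney split chain then has an accessible atom, the excursions between visits to the atom are one-dependent in general but i.i.d.\ from the second visit onward, Kac's occupation formula applied to $|g|$ gives integrability of the block sums and the identity $\mathbb{E}_{\pi}(g)=\mathbb{E}[S_{k}]/\mathbb{E}[\ell_{k}]$, and two applications of the i.i.d.\ strong law plus control of the straddling block finish the argument. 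Two small refinements are worth noting. First, aperiodicity is not actually needed for the almost-sure law of large numbers --- it matters for total-variation convergence of $P^{n}(x,\cdot)$ to $\pi$, not for Birkhoff-type averages --- so your appeal to it in the splitting step is superfluous rather than load-bearing; the theorem as stated simply carries a hypothesis it does not use here. Second, when $m>1$ the regeneration structure lives on the $m$-skeleton, and one must patch the full-chain average back together from the skeleton averages; this is routine but should be acknowledged. The technical obstacles you flag (existence of the small set, independence of the split-chain excursions on an uncountable state space) are precisely where the published proofs spend their effort, so your assessment of where the difficulty lies is accurate.
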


A stronger property sought in MCMC methods is that the generated chain be geometrically ergodic \cite{tierney} (see Definition \ref{geometrica}, below). The importance of this property, as stated in \cite{ergodicoMH}, is that it implies that the central limit theorem proved in \cite{kass}, is satisfied for the chain's simulated states. Moreover, it implies that the convergence to the stationary distribution occurs at an exponential rate.

\begin{definition}[Geometrically ergodic Markov chain]\label{geometrica}
Consider a Markov chain $\{X_n : n = 0, 1, 2, \dots, N-1\}$ with state space $\mathbb{X}$, stationary distribution $\pi(dy)$, and $n$-step transition kernel $P^n(dy \mid x)$. The chain is said to be geometrically ergodic if there exists a constant $\lambda\in[0,1)$ such that  
\begin{equation}\label{eqgeom}
\| P^n(\cdot\mid x) - \pi(\cdot) \| 
\leq M(x) \lambda^n,
\end{equation}
where $M(x)$ is a real-valued, integrable function.
\end{definition}

 In general, it cannot be guaranteed that the Markov chain generated by the HMC algorithm is geometrically ergodic; however, there are some important articles that discuss in detail the conditions under which this property is satisfied. 

In \cite{ergodicidad}, it is proven that under independent position integration times, the conditions for geometric ergodicity are essentially a gradient of the log-density which asymptotically points towards the center of the space and grows no faster than linearly. For the case of position dependent integration times, it is shown that a much broader class of tail behaviors can generate geometrically ergodic chains. This paper also presents some cases in which geometric ergodicity cannot be guaranteed.

Another important text that discusses in depth the geometric ergodicity property in HMC  is \cite{ergo2}. In that work, it is proved under which conditions in the Leapfrog algorithm and target distribution, the chain is irreducible and positive recurrent. Then, under more stringent conditions, it is shown that the chain generated is Harris recurrent and geometrically ergodic. 

The present work does not cover all the conditions under which HMC generates a geometrically ergodic Markov chain, as they are extensive and fall outside the main focus of this work. Readers interested in learning more about this topic are encouraged to consult \cite{ergodicidad} and \cite{ergo2}.

\section{Modern variants of Hamiltonian Monte Carlo}
\label{sec:modern_variants}

In this section, we review some modern variants of the HMC algorithm designed to overcome the inherent sensitivities of standard Hamiltonian dynamics. As demonstrated in previous sections, the efficiency of standard HMC is heavily bottlenecked by the need to manually tune the leapfrog step size $\epsilon$ and the trajectory length $L$. Furthermore, standard HMC often struggles to traverse the low-probability energy barriers present in multimodal distributions. 

To address these limitations, we first discuss the No-U-Turn Sampler (NUTS) \cite{hoffman2014no}, which dynamically determines appropriate the trajectory length $L$ to prevent the Markov chain from retracing its steps (``U-turning'') and wasting computational effort. Next, we review the unadjusted HMC with Stratified Monte Carlo time integration (uHMC with SMC) \cite{bou2025unadjusted}. Finally, we explore the repelling-attracting HMC (raHMC) \cite{vishwanath2024repelling}, a method that modifies the system's momentum using friction to efficiently escape local modes. When combined with the Dual Averaging scheme introduced in Section \ref{subsec:DA} to adapt $\epsilon$, these variants resolve the hyperparameter tuning bottleneck. We compare their empirical performance in Section \ref{compacompa}.

\subsection{The No-U-Turn Sampler}
As it was already discussed in Section \ref{se_construc}, it is well documented in the literature that the performance of the HMC algorithm is sensitive to the choice of $\epsilon$ and $L$; in particular, some specifications may lead to poor performance of the algorithm.

Consequently, the effective use of HMC may require considerable expertise and experience in tuning these parameters, or alternatively, performing several preliminary runs of the algorithm in order to find an appropriate parametrization. The NUTS algorithm aims to use HMC without the need to tune the parameter $L$. To achieve this, it introduces an adaptive mechanism based on certain geometric criteria that we briefly describe below.

The main criterion is based on the dot product between the current momentum and the difference between the current position and the initial position of the trajectory. The idea is to continue simulating the leapfrog dynamics until this quantity becomes negative. Intuitively, this criterion detects the moment when the trajectory begins to return toward the initial position; that is, when the proposals start to ``turn back''. Intuitively, this wastes computational resources by proposing a state highly correlated with the initial state despite the computational effort. The NUTS monitors the distance between the proposed state and the initial state, terminating the leapfrog integration the exact moment the trajectory begins to double back on itself.

However, this criterion alone is not sufficient. In order to guarantee reversibility of the algorithm, at each iteration a trajectory is constructed recursively whose number of leapfrog steps doubles at each expansion, while the direction of exploration (forward or backward in time) is chosen at random. This constructs a tree whose nodes correspond to possible candidate states for the chain. For further details on this procedure, we refer the reader to \cite{hoffman2014no}.

Next, we present the pseudocode of an efficient version of the NUTS algorithm, using a transition kernel that leaves invariant the uniform distribution over the set of candidate states, from which the proposed state is ultimately selected. Further, we can adaptively tune $\epsilon$ using the Dual Averaging scheme. Since NUTS does not involve a single Metropolis accept-reject step, an alternative statistic to the acceptance probability must be defined. Set
\[
H_t^{\text{NUTS}} := \frac{1}{|B_t^{\text{final}}|} \sum_{(\theta,r)\in B_t^{\text{final}}} 
\min\left(1, \frac{p(\theta,r)}{p(\theta_{t-1}, r_{t,0})} \right),
\quad\text{for each}\enskip
t\in\mathbb{Z}_+,
\]
and define its expectation in stationarity as
\[
h^{\text{NUTS}} := \mathbb{E}_\pi\big[ H_t^{\text{NUTS}} \big],
\]
where $B_t^{\text{final}}$ denotes the set of states explored during the final doubling stage of iteration $t$, and $(\theta_{t-1}, r_{t,0})$ is the initial state and resampled momentum.

The quantity $H_t^{\text{NUTS}}$ can be interpreted as the average acceptance probability that a standard HMC step would assign to the states visited in the final doubling phase. Assuming that $h^{\text{NUTS}}$ is nonincreasing in $\epsilon$, one can apply the Dual Averaging scheme with
\[
H_t \equiv \delta - H_t^{\text{NUTS}}, \qquad x = \log \epsilon,
\]
to adapt $\epsilon$ so as to enforce $h^{\text{NUTS}} \approx \delta$ for a target $\delta \in (0,1)$.

\begin{algorithm}[ht]
\caption{{Efficient No-U-Turn Sampler}~\cite{hoffman2014no}}
\begin{algorithmic}
\STATE \textbf{Input:} initial $q_0$, step size $\epsilon>0$, potential $U=-\log S$, sample size $N\in\mathbb{Z}_+$

\FOR{$m = 1$ to $N$}

\STATE Sample $p_0 \sim \mathcal{N}(0,I)$ and $u \sim \mathrm{Uniform}\left([0,\exp\{-U(q_{m-1})-\tfrac12 p_0^\top p_0\}]\right)$

\STATE Initialize $q^- = q^+ = q_{m-1},\; p^-=p^+ = p_0,\; j = 0,\; q_{m} = q_{m-1},\; n = 1,\; s = 1$

\WHILE{$s = 1$}

\STATE Choose a direction $v_j \sim \mathrm{Uniform}(\{-1,1\})$

\IF{$v_j = -1$}

\STATE $q^-, p^-, -, -, q', n', s' \leftarrow \text{BuildTree}(q^-,p^-,u,v_j,j,\epsilon)$

\ELSE

\STATE $-, -, q^+, p^+, q', n', s' \leftarrow \text{BuildTree}(q^+,p^+,u,v_j,j,\epsilon)$

\ENDIF

\IF{$s' = 1$}

\STATE With probability $\min\{1,\frac{n'}{n}\}$ set $q^{m} \leftarrow q'$

\ENDIF

\STATE Update $n \leftarrow n + n'$, $j\leftarrow j+1$

\STATE Update $s \leftarrow s' \mathbf{1}[(q^+ - q^-)\cdot p^- \ge 0]\mathbf{1}[(q^+ - q^-)\cdot p^+ \ge 0]$

\ENDWHILE

\ENDFOR

\STATE \textbf{return} $\{q_1,q_2,\ldots,q_N\}$

\end{algorithmic}
\end{algorithm}

\begin{algorithm}[ht]
\caption{BuildTree algorithm~\cite{hoffman2014no}}
\begin{algorithmic}

\STATE \textbf{function} BuildTree$(q,p,u,v,j,\epsilon)$

\IF{$j = 0$}

\STATE $(q',p') \leftarrow \text{Leapfrog}(q,p,v\epsilon)$

\STATE $n' \leftarrow \mathbf{1}[u \le \exp\{-U(q')-\tfrac12 p'^\top p'\}]$

\STATE $s' \leftarrow \mathbf{1}[u < \exp\{\Delta_{\max}-U(q')-\tfrac12 p'^\top p'\}]$

\STATE \textbf{return} $q',p',q',p',q',n',s'$

\ELSE

\STATE $q^-,p^-,q^+,p^+,q',n',s' \leftarrow \text{BuildTree}(q,p,u,v,j-1,\epsilon)$

\IF{$s'=1$}

\IF{$v=-1$}

\STATE $q^-,p^-,-,-,q'',n'',s'' \leftarrow \text{BuildTree}(q^-,p^-,u,v,j-1,\epsilon)$

\ELSE

\STATE $-,-,q^+,p^+,q'',n'',s'' \leftarrow \text{BuildTree}(q^+,p^+,u,v,j-1,\epsilon)$

\ENDIF

\STATE With probability $\frac{n''}{n'+n''}$ set $q' \leftarrow q''$

\STATE Update $n' \leftarrow n' + n''$, $s' \leftarrow s''\mathbf{1}[(q^+ - q^-)\cdot p^- \ge 0]\mathbf{1}[(q^+ - q^-)\cdot p^+ \ge 0]$

\ENDIF

\STATE \textbf{return} $q^-,p^-,q^+,p^+,q',n',s'$

\ENDIF

\end{algorithmic}
\end{algorithm}

\subsection{Dual Averaging}
\label{subsec:DA}
We can avoid the necessity of tuning the parameter $\epsilon$ by using a stochastic approximation scheme. The most basic is that introduced by \cite{robbins1951stochastic}. However, in MCMC applications, the Dual Averaging (DA) scheme is preferred (cf.~\cite{nesterov2009primal}). The purpose of this section is to describe this scheme. Before proceeding, we note that the DA can be coupled with any of the HMC variants, thus branching every variant into two, e.g., NUTS and NUTS-DA.

Suppose that $H_t$ is an observable function whose expectation (under stationarity), conditional on a parameter $x$, is unknown and denoted by
\[
h(x)=\mathbb{E}_\pi[H_t\mid x].
\]
Our objective is to find a parameter value $x^*$ such that $h(x^*)=0$.
Under suitable assumptions on $h$, the averaged sequence $(\bar{x}_t)_{t\geq 1}$
generated by the following recursion is ensured to converge to such a value $x^*$:
\begin{equation}\label{DAscheme}
    x_{t+1} \leftarrow \mu 
- \frac{\sqrt{t}}{\gamma}\frac{1}{t+t_0} \sum_{i=1}^{t} H_i,
\qquad
\bar{x}_{t+1} \leftarrow 
\eta_t x_{t+1} + (1-\eta_t)\bar{x}_t.
\end{equation}

Here, $\mu$ is a free parameter toward which the iterates are shrunk, $\gamma>0$ controls the strength of this shrinkage, and $t_0$ helps stabilize the early iterations. The step sizes $\eta_t>0$ are chosen according to the usual Robbins--Monro decay conditions:
\[
\sum_{t=1}^{\infty} \eta_t = \infty,
\qquad
\sum_{t=1}^{\infty} \eta_t^2 < \infty.
\]3
We initialize the averaged sequence by setting $\bar{x}_1=x_1$. It is common practice to set $\eta_t=t^{-\kappa}$, with $\kappa\in (0.5,1]$.

The use of this scheme in MCMC applications consists of setting $H_t$ as a statistic that describes a behavioral aspect of the algorithm at iteration $t$, and $x$ as a tunable parameter of the MCMC algorithm.

For example, in the HMC algorithm, let $\alpha_t$ be the acceptance probability at iteration $t$, $\delta\in (0,1)$ a desired average acceptance probability, $H_t=\delta-\alpha_t$, and $\epsilon$ the step size of the numerical integrator. Then the sequence \ref{DAscheme} with $x=\log\epsilon$  and $h(x)=\mathbb{E}[H_t\mid x]$ is ensured to converge to a value of $x$ such that $h(x)=0$ which translates into a value of  $\epsilon $ such that the expected acceptance probability is equal to $\delta$, $\mathbb{E}[\alpha_t\mid \epsilon]=\delta$. In the examples we develop in later sections we set $\gamma=0.05$, $t_0=10$, $\kappa=0.75$ following the work of \cite{hoffman2014no}.

\subsection{Unadjusted HMC with Stratified Monte Carlo integration}

We now proceed to review the unadjusted HMC algorithm with Stratified Monte Carlo time integration (uHMC) presented in \cite{bou2025unadjusted}. Recall that the HMC algorithm includes a Metropolis–Hastings correction step whose purpose is to remove the bias introduced by the discretization of the Hamiltonian dynamics. In some situations, this bias is tolerable, leading to a scheme known as unadjusted HMC.

In this variant we assume that we wish to sample from a distribution whose density is proportional to $e^{-U(q)}$, where the potential $U$ is continuously differentiable, $K$-strongly convex and its gradient is $L$-Lipschitz. Under these conditions, an alternative scheme for integrating the Hamiltonian dynamics is proposed, which is based on the use of Stratified Monte Carlo. We briefly describe the proposal of this method below.

Let $(q(t), p(t))$ denote the exact flow associated with the Hamiltonian dynamics. Instead of considering a deterministic approximation of this dynamics, we propose a randomized scheme for each integration step. To this end, we fix an evenly spaced time grid given by $\{t_i = ih\}_{i \in \mathbb{N}_0}$, and consider a collection of random variables $\{U_i\}$ such that $U_i \sim \mathrm{Uniform}(t_i, t_{i+1})$. One step of the approximate trajectory is then given by
\begin{gather*}
q_{t_{i+1}} = q_{t_i} + h p_{t_i} + \frac{h^2}{2} F_{t_i}, 
\enskip 
p_{t_{i+1}} = p_{t_i} + h F_{t_i},
\enskip
F_{t_i} = -\nabla U(q_{t_i} + (U_i - t_i)\, p_{t_i}).
\end{gather*}

We highlight that this method uses only one gradient evaluation per step, in contrast to the two gradient evaluations per step required by the leapfrog integrator. Furthermore, \cite{bou2025unadjusted} provides guarantees in the $L^2$-Wasserstein distance by showing that the uHMC kernel is contractive in the $L^2$-Wasserstein distance, implying geometric convergence towards a limiting distribution and stability with respect to the initial condition.

\subsection{Repelling-attracting HMC}
\label{sec:raHMC}

The  repelling-attracting HMC (raHMC) method is designed to sample from multimodal distributions. It is well known in the literature that the Hamiltonian Monte Carlo (HMC) algorithm may encounter difficulties in efficiently exploring target densities with multiple modes (see, e.g., \cite{celeux2000computational,sminchisescu2007generalized}).

The main idea behind raHMC is to modify the Hamiltonian dynamics by incorporating an additional term that models friction in the system, controlled by a friction coefficient. Letting $z_t = (q_t, p_t)$, the dynamics are given by
\begin{gather*}
\frac{d z_t}{dt} = \Omega \nabla_z H(z_t) \pm \Gamma z_t,\\
\text{where}\quad
\Omega :=
\begin{pmatrix}
0 & I \\
- I & 0
\end{pmatrix},
\qquad
\Gamma :=
\begin{pmatrix}
0 & 0 \\
0 & \gamma I
\end{pmatrix},
\end{gather*}
with $\gamma > 0$ denoting the friction coefficient and $I$ the identity matrix of appropriate dimension. When the friction is subtracted, that is, when the dynamics include
$-\Gamma z_t$, the trajectory is attracted toward critical points of $U(q)$. In contrast, when the friction term is added, that is, when the dynamics include
$+\Gamma z_t$, the trajectory is repelled away from the critical points of
$U(q)$. In order to preserve nice properties of HMC such as energy conservation and symplecticity, raHMC combines these two phases as follows.

For a time $T>0$, the time interval is divided into two parts: $(0, T/2]$ and $(T/2, T]$. During the first half, the repelling dynamics is used in order to encourage the chain to move away from the current mode. During the second half, the attracting dynamics is used in order to stabilize the trajectory in a possibly different region of the state space. More precisely, the dynamics are defined by
\begin{equation}\label{raHMCdynamics}
    \frac{d z_t}{dt} =
\begin{cases}
\Omega \nabla_z H(z_t) + \Gamma z_t, & 0 < t \leq T/2, \\
\Omega \nabla_z H(z_t) - \Gamma z_t, & T/2 < t \leq T.
\end{cases}
\end{equation}
The raHMC uses these dynamics to propose the next state of the chain, which is accepted or rejected via a Metropolis--Hastings step with probability
\[
\alpha(z_T \mid z) = \min\left(1, \exp\big(H(\tilde{z}_T) - H(z)\big)\right),
\]
where $\tilde{z}_T=F(z_T)=F(q_T,p_T)=(q_T,-p_T)$, and $z_T$ is the final state of~\eqref{raHMCdynamics}. 

The numerical implementation of the proposed dynamics relies on a conformal Leapfrog integrator. For the first equation in~\eqref{raHMCdynamics}, we have
\[
\Phi_\epsilon^+=\xi_{\epsilon/2}^\Gamma\circ \Phi_\epsilon^H\circ \xi_{\epsilon/2}^\Gamma.
\]
For the second equation in~\eqref{raHMCdynamics}, we have
\[
\Phi_\epsilon^-=\xi_{\epsilon/2}^{-\Gamma}\circ \Phi_\epsilon^H\circ \xi_{\epsilon/2}^{-\Gamma}.
\]
Here, $\Phi_\epsilon^H$ is the usual Leapfrog step, and $\xi_{\epsilon/2}^\Gamma(z)=\xi_{\epsilon/2}^\Gamma(q,p)=(q,e^{\gamma \epsilon/2}p)$ is the exact flow associated with $d z_t/dt=\Gamma z_t$. At time $T=L \epsilon$, $z_T$ is given by 
\[
z_T=\Phi^-_{\epsilon, L/2}\circ\Phi^+_{\epsilon,L/2}(z)=(\Phi^-_\epsilon)^{L/2}\circ (\Phi^+_\epsilon)^{L/2}(z).
\]
The pseudocode for the raHMC is given in Algorithm \ref{raHMC}.

Two parameters must be tuned for the raHMC: the step size $\epsilon$ and the friction coefficient $\gamma$. To apply Dual Averaging in this setting, we consider the log-parametrization $x = (\log \epsilon, \log \gamma)$. The stochastic update is defined using 
\[
h_t = (f_t, f_t),
\quad\text{where}\quad
f_t := \delta - \min\big\{1, \exp\big(H(\Psi_{x,T}(q_t, p)) - H(q_t, p)\big)\big\}.
\]
Here, $\delta \in (0,1)$ is the target acceptance probability and $\Psi_{x,T}(q_t, p)$ denotes the proposal generated by the raHMC dynamics with parameters determined by $x$ at iteration $t$.

This construction allows us to adapt both $\epsilon$ and $\gamma$ simultaneously using a shared acceptance-based feedback signal within the Dual Averaging framework.

\begin{algorithm}[ht]
\caption{The Repelling-Attracting Hamiltonian Monte Carlo (raHMC) algorithm~\cite{vishwanath2024repelling}}
\label{raHMC}
\begin{algorithmic}[1]

\STATE \textbf{Input:} Potential $U := -\log S$, mass matrix $\Sigma \succ 0$, friction coefficient $\gamma > 0$, step size $\epsilon > 0$, length $L > 0$, sample size $N \in \mathbb{Z}_{+}$

\STATE \textbf{Output:} Samples $\{q_1, q_2, \ldots, q_N\} \subset \mathbb{R}^d$

\STATE Initialize $q_0 \in \mathbb{R}^d$ and define
$H(q,p) := U(q) + \frac12 p^\top \Sigma^{-1} p$

\FOR{$n = 1$ to $N$}
    \STATE Sample $p_{n-1} \sim \mathcal{N}(0,\Sigma)$
    \STATE Set $(q,p) \leftarrow (q_{n-1}, p_{n-1})$
    
    \FOR{$i = 1$ to $\lfloor L/2 \rfloor$}
        \STATE $(q,p) \leftarrow \textsc{Conformal-Leapfrog}(q,p,\Sigma,\epsilon,+\gamma)$
    \ENDFOR
    
    \FOR{$i = 1$ to $\lfloor L/2 \rfloor$}
        \STATE $(q,p) \leftarrow \textsc{Conformal-Leapfrog}(q,p,\Sigma,\epsilon,-\gamma)$
    \ENDFOR
    
    \STATE Set $(q,p) \leftarrow (q,-p)$ and $\alpha \leftarrow \min\left\{1,\exp\big(H(q,p)-H(q_{n-1},p_{n-1})\big)\right\}$
    \STATE Sample $u \sim \mathrm{Unif}(0,1)$

    \STATE Update $q_n \leftarrow q\mathbf{1}[u<\alpha]+q_{n-1}\mathbf{1}[u\ge \alpha]$
\ENDFOR

\STATE \textbf{return} $\{q_1,q_2,\ldots,q_N\}$

\end{algorithmic}
\end{algorithm}

\begin{algorithm}[ht]
\caption{Conformal-Leapfrog~\cite{vishwanath2024repelling}}
\begin{algorithmic}[1]

\STATE $\tilde{p} \leftarrow e^{-\gamma \epsilon/2}p - \frac{\epsilon}{2}\nabla U(q)$
\STATE $\tilde{q} \leftarrow q + \epsilon \Sigma^{-1}\tilde{p}$
\STATE $\tilde{p} \leftarrow e^{-\gamma \epsilon/2}\left(\tilde{p} - \frac{\epsilon}{2}\nabla U(\tilde{q})\right)$
\STATE \textbf{return} $(\tilde{q},\tilde{p})$

\end{algorithmic}
\end{algorithm}

\section{Examples and comparisons}
\label{compacompa}

In this section, we empirically evaluate the performance of the algorithms discussed above across five target distributions. To provide a rigorous comparative analysis, we must account for both the statistical quality of the generated samples and the computational cost required to obtain them. 

In Markov Chain Monte Carlo methods, successive samples are inherently correlated, which diminishes the amount of independent information gathered about the target distribution. To quantify this reduction, we utilize the Effective Sample Size (ESS), which estimates the equivalent number of independent and identically distributed draws contained within our autocorrelated chain. A higher ESS indicates superior geometric exploration and more efficient traversal of the target density.

However, evaluating algorithms solely on ESS is insufficient for practical applications. Complex dynamic methods, such as NUTS-DA, may yield a remarkably high ESS by simulating long Hamiltonian trajectories, but at the expense of significant computational overhead per iteration. Therefore, we adopt the computation time normalized by the Effective Sample Size (Time/ESS) as our primary metric of overall efficiency. This ratio explicitly measures the computational cost (in seconds) required to produce exactly one effectively independent sample, providing a holistic evaluation that balances a sampler's statistical robustness with its algorithmic execution time.

Using the Random Walk Metropolis-Hastings (RWMH) as our baseline, we test the manually tuned standard HMC alongside the adaptive NUTS-DA and raHMC-DA algorithms (i.e., NUTS and raHMC with Dual Averaging). We empirically evaluate the performance of all the algorithms discussed above
through five examples. In each experiment, we run the chain for 10,000
iterations. For algorithms that require a warm-up phase, we use 2,000 warm-up
iterations, which are discarded before computing the performance metrics.

We use the \textit{Effective Sample Size} (ESS) to compare the performance of
the algorithms in the univariate examples, and the minimum ESS across
coordinates, denoted by min ESS, in the multivariate examples. The ESS is
computed as
\[
\mathrm{ESS}=\frac{N}{\tau},
\quad\text{where}\quad
\tau = 1 + 2\sum_{k=1}^{\infty}\rho_k,
\]
is the \textit{integrated autocorrelation time}, computed using $\rho_k = \operatorname{Corr}(X_t,X_{t+k})$, and $N$ denotes the number of post-warm-up samples.

In practice, the infinite sum is truncated when the estimated autocorrelation becomes non-positive. In addition to ESS, we compare the algorithms using the Sec./ESS metric, defined as the total running time divided by the ESS. This metric measures the computational cost required to obtain one effectively independent sample. For the bivariate Gaussian mixture presented in Subsection~\ref{mez_gau}, we use a mode indicator that simply records which mean is closer to each sample.

\subsection{Gamma} 
In the first example, a sample is drawn from a $\mathrm{Gamma}(5,1)$ distribution. We ran the algorithms with starting point $x=100$. In order to generate the simulations with the HMC algorithm and its variants, it is necessary to take into account that the kernel of the distribution is given by
\[
S(q)= q^4e^{-q}.
\]

Therefore, the potential energy of the Hamiltonian $H(q,p)$ is given by

\[
U(q)=-4\log(q)+q.
\]

The parameter values used in the experiments were fixed as follows. 
For RWMH, we used a Gaussian proposal with standard deviation $1.0$. 
For HMC, we set $\epsilon=0.1$ and $L=100$, where $\epsilon$ denotes the leapfrog step size and $L$ the number of leapfrog steps. 
For NUTS, we used $\epsilon=0.1$. 
For raHMC, we set $\epsilon=0.08$, $\gamma=0.4$, and $L=20$. 
For uHMC, we used an integration time $T=1.0$ and a discretization step $h=0.02$. 
For HMC-DA, we initialized the step size at $\epsilon_0=0.03$ and set $\lambda=6.0$, whereas for NUTS-DA we used $\epsilon_0=0.03$. 
Finally, for raHMC-DA, we set $\epsilon_0=0.1$, $\gamma_0=0.4$, and $T=10.2$. 
The numerical results are reported in Table~\ref{TablaGamma}, while sample histograms are shown in Figure~\ref{figGamma}.

\begin{table}[t]
\centering
\caption{Results for the Gamma(5,1) distribution}
\label{TablaGamma}
\begin{tabular}{lccc}
\hline
Algorithm & Target Acc. Pr. & ESS & Sec./ESS \\
\hline
HMC & 0.999 & 4802.018 & 1.57e-03 \\

HMC-DA & 0.778 & 6068.682 & 1.43e-04 \\

NUTS & N/A & 456.886 & 5.96e-02 \\

NUTS-DA & N/A & 1408.980 & 2.68e-03 \\

RWMH & 0.847 & 31.200 & 1.31e-02 \\

raHMC & 0.929 & 174.273 & 2.81e-02 \\

raHMC-DA & 0.633 & 1727.601 & 2.25e-02 \\

uHMC & N/A & 74.932 & 1.16e-01 \\

\hline
\end{tabular}
\end{table}

\begin{figure}[t]
\centering

\begin{subfigure}{0.23\textwidth}
    \centering
    \includegraphics[width=\linewidth]{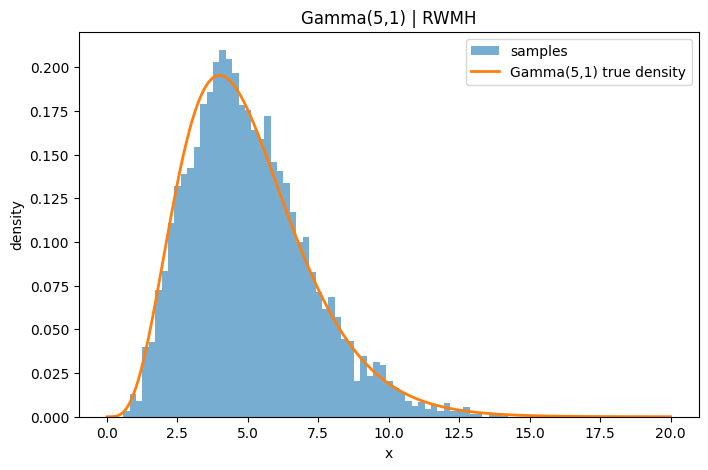}
    \caption{RWMH}
\end{subfigure}
\hfill
\begin{subfigure}{0.23\textwidth}
    \centering
    \includegraphics[width=\linewidth]{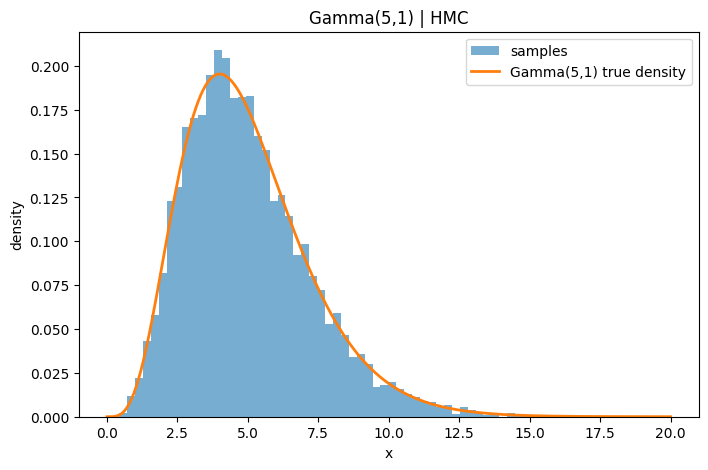}
    \caption{HMC}
\end{subfigure}
\hfill
\begin{subfigure}{0.23\textwidth}
    \centering
    \includegraphics[width=\linewidth]{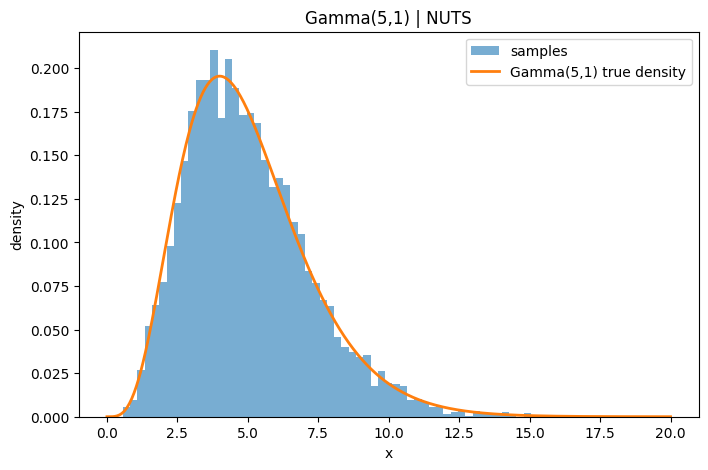}
    \caption{NUTS}
\end{subfigure}
\hfill
\begin{subfigure}{0.23\textwidth}
    \centering
    \includegraphics[width=\linewidth]{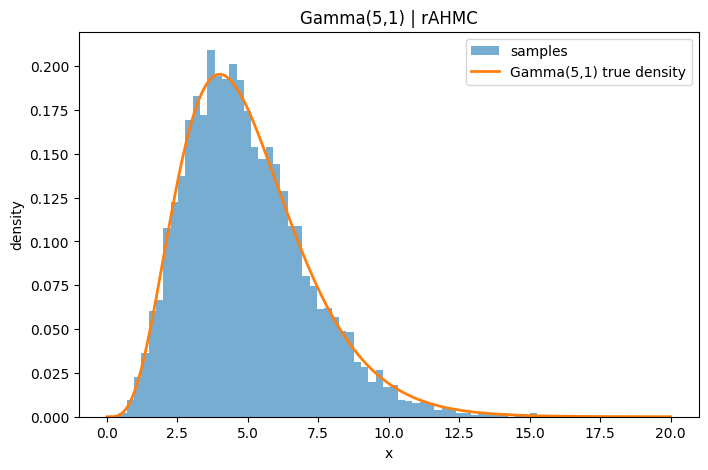}
    \caption{raHMC}
\end{subfigure}

\vspace{0.5cm}

\begin{subfigure}{0.23\textwidth}
    \centering
    \includegraphics[width=\linewidth]{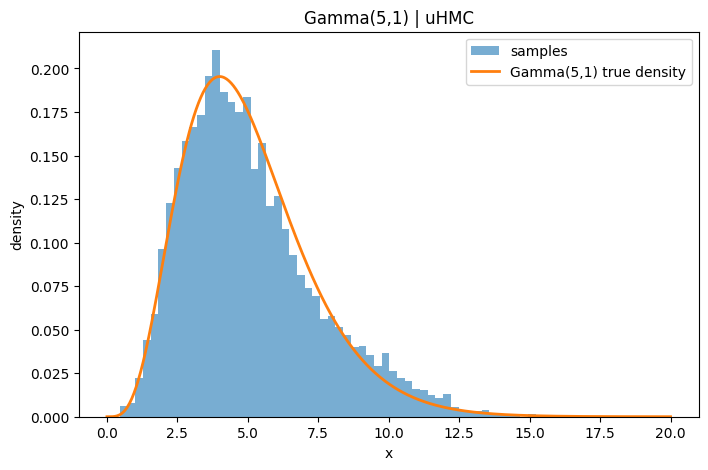}
    \caption{uHMC}
\end{subfigure}
\hfill
\begin{subfigure}{0.23\textwidth}
    \centering
    \includegraphics[width=\linewidth]{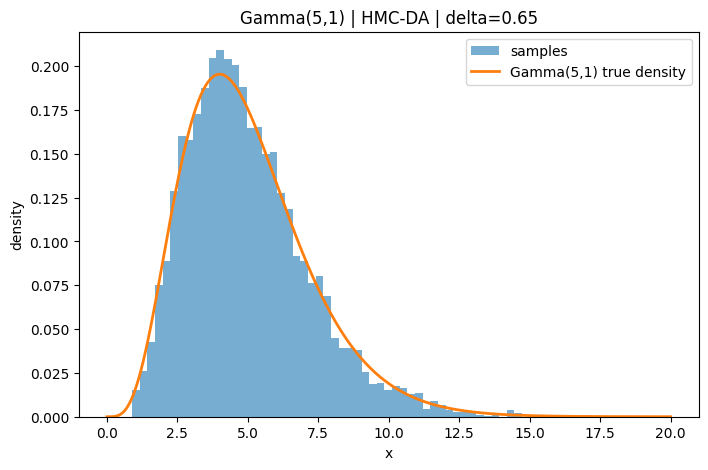}
    \caption{HMC-DA}
\end{subfigure}
\hfill
\begin{subfigure}{0.23\textwidth}
    \centering
    \includegraphics[width=\linewidth]{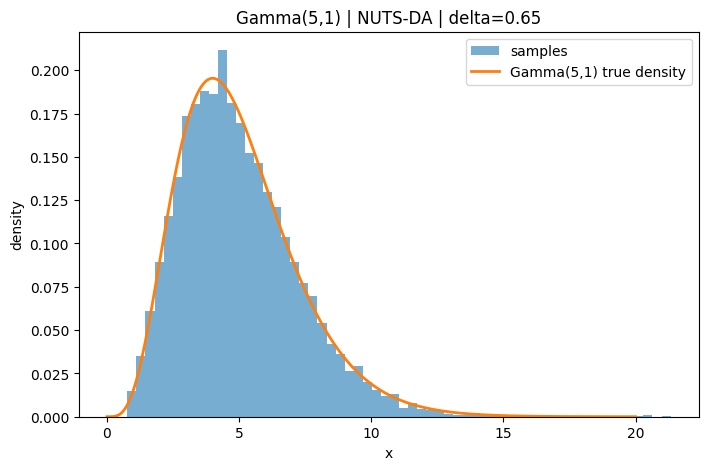}
    \caption{NUTS-DA}
\end{subfigure}
\hfill
\begin{subfigure}{0.23\textwidth}
    \centering
    \includegraphics[width=\linewidth]{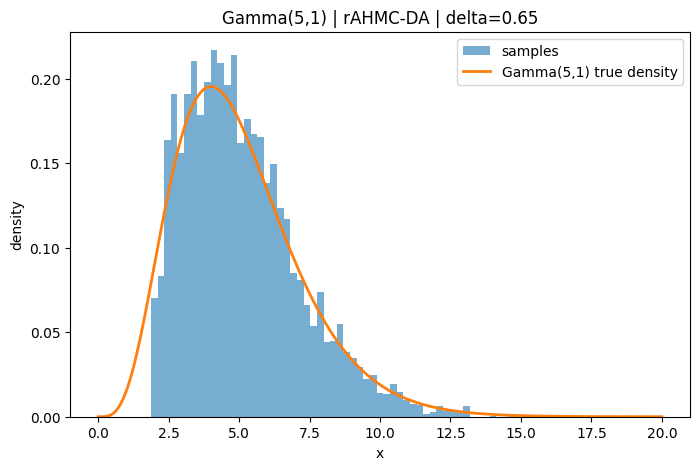}
    \caption{raHMC-DA}
    \label{raHMC-gamma}
\end{subfigure}

\caption{Histograms for the Gamma target distribution.}
\label{figGamma}
\end{figure}

With this configuration, we observe that the highest ESS values are reached by HMC and HMC with Dual Averaging, followed by raHMC and NUTS, both with Dual Averaging. Regarding the Sec./ESS metric, the best performance is attained by HMC with Dual Averaging, while HMC and NUTS also obtain competitive performance. For this example, it seems that HMC is the most competitive algorithm, especially when enhanced by the Dual Averaging scheme. Finally, for the raHMC enhanced by (Figure \ref{raHMC-gamma}), we observe that the left tail is poorly explored; we conjecture that it might be due to the gradient explosion near zero combined with the specific combination of $(\epsilon, \gamma)$ found by the Dual-Averaging scheme.

\subsection{Multivariate normal}

In the second example, a sample is drawn from a $100$-dimensional multivariate Gaussian distribution whose components are independent, have zero mean, and standard deviations $0.01, 0.02, \ldots, 0.99, 1.00$. In all of the algorithms, we generate 10,000 states, with starting point $x_0 = (7, \ldots, 7)$. For the simulations with the HMC algorithm and its variants, we consider that the kernel of the distribution is given by

\[
S(q)= \exp\left(-\frac{1}{2}\sum_{i=1}^{100}\frac{q_i^2}{\sigma_i^2}\right),
\]

where $\sigma_i = 0.01 i$, for $i=1,\ldots,100$.

In turn the potential energy of the Hamiltonian $H(q,p)$ is given by

\[
U(q)= \frac{1}{2}\sum_{i=1}^{100}\frac{q_i^2}{\sigma_i^2}.
\]

The parameter values used in the experiment were fixed as follows. 
For RWMH, we used a Gaussian proposal with standard deviation $0.02$. 
For HMC, we set $\epsilon=0.003$ and $L=300$, while for NUTS we used $\epsilon=0.013$. 
For raHMC, we set $\epsilon=0.003$, $\gamma=0.2$, and $L=300$. 
For uHMC, we used $T=1.5$ and $h=0.003$. 
For the adaptive variants, we initialized HMC-DA with $\epsilon_0=0.013$ and $\lambda=2.0$, NUTS-DA with $\epsilon_0=0.013$, and raHMC-DA with $\epsilon_0=0.009$, $\gamma_0=0.3$, and $T=2.0$. 
The numerical results are reported in Table~\ref{TablaNormal}.

\begin{table}[t]
\centering
\caption{Results for the multivariate ($d=100$) normal distribution}
\label{TablaNormal}
\begin{tabular}{lccc}
\hline
Algorithm & Target Acc. Pr. & Min. ESS & Sec./Min. ESS \\
\hline
HMC & 0.995 & 57.274 & 3.24e-01 \\

HMC-DA & 0.758 & 4.598 & 2.13e+00 \\

NUTS & N/A & 1389.889 & 8.87e-02 \\

NUTS-DA & N/A & 4364.506 & 3.18e-02 \\

RWMH & 0.430 & 2.642 & 9.47e-02 \\

raHMC & 0.133 & 6.368 & 2.54e+01 \\

raHMC-DA & 0.675 & 3.162 & 2.09e+02 \\

uHMC & N/A & 3.804 & 2.07e+01 \\

\hline
\end{tabular}
\end{table}

In this example, we find that the NUTS algorithm seems to be the most efficient for handling high dimensionality, although HMC has better performance than the remaining algorithms.
\subsection{Gaussian Mixture}
\label{mez_gau}

It is well known that the RWMH algorithm encounters difficulties when sampling from multimodal distributions. In order to assess whether the HMC algorithm and its variants exhibit similar limitations, we consider a Markov chain whose target distribution is given by the following Gaussian mixture density:
\begin{equation*}
S(q) = 0.4\, f(x \mid \mu_1, \Sigma_1) + 0.6\, f(x \mid \mu_2, \Sigma_2).
\end{equation*}

Here, $f$ denotes the density of a multivariate normal distribution, defined as
\begin{equation*}
f(x \mid \mu, \Sigma) =
\frac{1}{(2\pi)^{n/2} \det(\Sigma)^{1/2}}
\exp\left(
-\frac{1}{2}(x - \mu)^\top \Sigma^{-1} (x - \mu)
\right).
\end{equation*}
The parameters of the mixture are given by
\begin{equation*}
\mu_1 = (0, 0),
\quad
\Sigma_1 = 
\begin{pmatrix}
1 & 0.5 \\
0.5 & 1
\end{pmatrix}, 
\quad
\mu_2 = (5, 5),
\quad
\Sigma_2 = 
\begin{pmatrix}
1 & -0.3 \\
-0.3 & 1
\end{pmatrix}.
\end{equation*}

We initialized all algorithms at the point $(-1.5,-1.5)$. 
The parameter values used in the experiment were fixed as follows. 
For RWMH, we used a Gaussian proposal with standard deviation $1.0$. 
For HMC, we set $\epsilon=0.20$ and $L=25$, where $\epsilon$ denotes the leapfrog step size and $L$ the number of leapfrog steps. For NUTS, we used $\epsilon=0.20$. For raHMC, we set $\epsilon=0.20$, $L=24$ and friction coefficient $\gamma=0.35$. For uHMC, we used an integration time $T=1.0$ and a discretization step $h=0.02$. For HMC-DA, we initialized the step size at $\epsilon_0=0.20$ and set $\lambda=5.0$. For NUTS-DA, we initialized the step size at $\epsilon_0=0.20$. Finally, for raHMC-DA, we used $\epsilon_0=0.05$, $T=5.0$ and friction coefficient $\gamma_0=0.10$. The numerical results are reported in Table~\ref{TablaNormalBivariada}, while the behavior of the algorithms is illustrated in Figure~\ref{fig:bivariate_gaussian_mixture_histograms}.

\begin{table}[t]
\centering
\caption{Results for the bivariate Gaussian mixture distribution}
\label{TablaNormalBivariada}
\small
\setlength{\tabcolsep}{3pt}
\begin{tabular}{lcccccc}
\hline
Algorithm & Target Acc. Pr. & Min. ESS & Sec./Min. ESS & Switches & Mode 1 & Mode 2 \\
\hline
HMC & 0.997 & 42.050 & 2.65e-01 & 35 & 0.213 & 0.787 \\

HMC-DA & 0.724 & 21.470 & 1.22e-01 & 23 & 0.497 & 0.504 \\

NUTS & N/A & 17.511 & 1.05e+00 & 23 & 0.455 & 0.546 \\

NUTS-DA & N/A & 15.909 & 4.20e-01 & 27 & 0.645 & 0.355 \\

RWMH & 0.526 & 5.320 & 1.03e-01 & 21 & 0.532 & 0.468 \\

raHMC & 0.803 & 120.050 & 2.04e-01 & 188 & 0.368 & 0.632 \\

raHMC-DA & 0.613 & 228.396 & 7.10e-02 & 484 & 0.409 & 0.592 \\

uHMC & N/A & 12.657 & 1.95e+00 & 13 & 0.511 & 0.489 \\

\hline
\end{tabular}
\end{table}
\begin{figure}[t]
\centering

\begin{subfigure}[t]{0.23\textwidth}
    \centering
    \includegraphics[width=\linewidth]{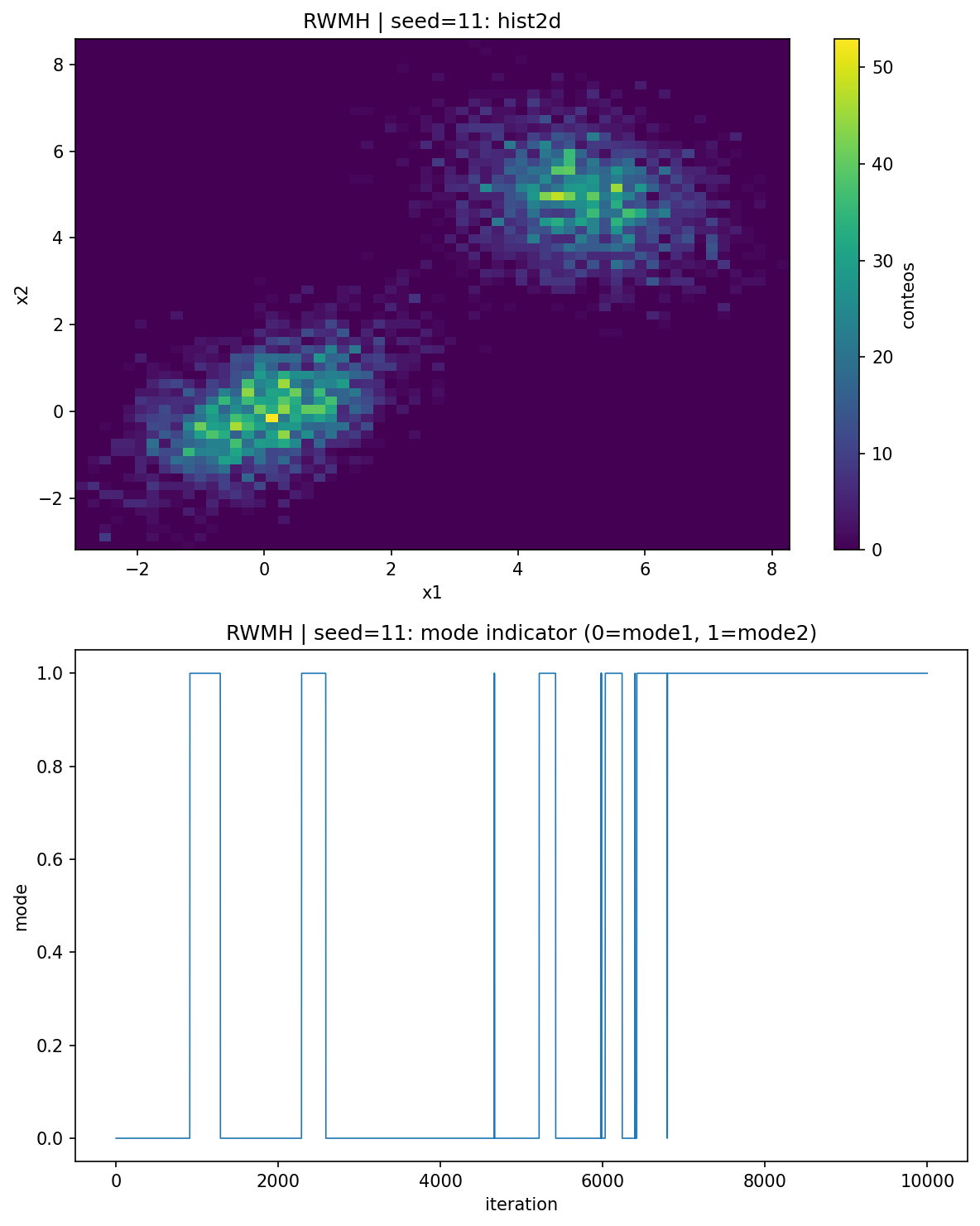}
    \caption{RWMH}
\end{subfigure}
\hfill
\begin{subfigure}[t]{0.23\textwidth}
    \centering
    \includegraphics[width=\linewidth]{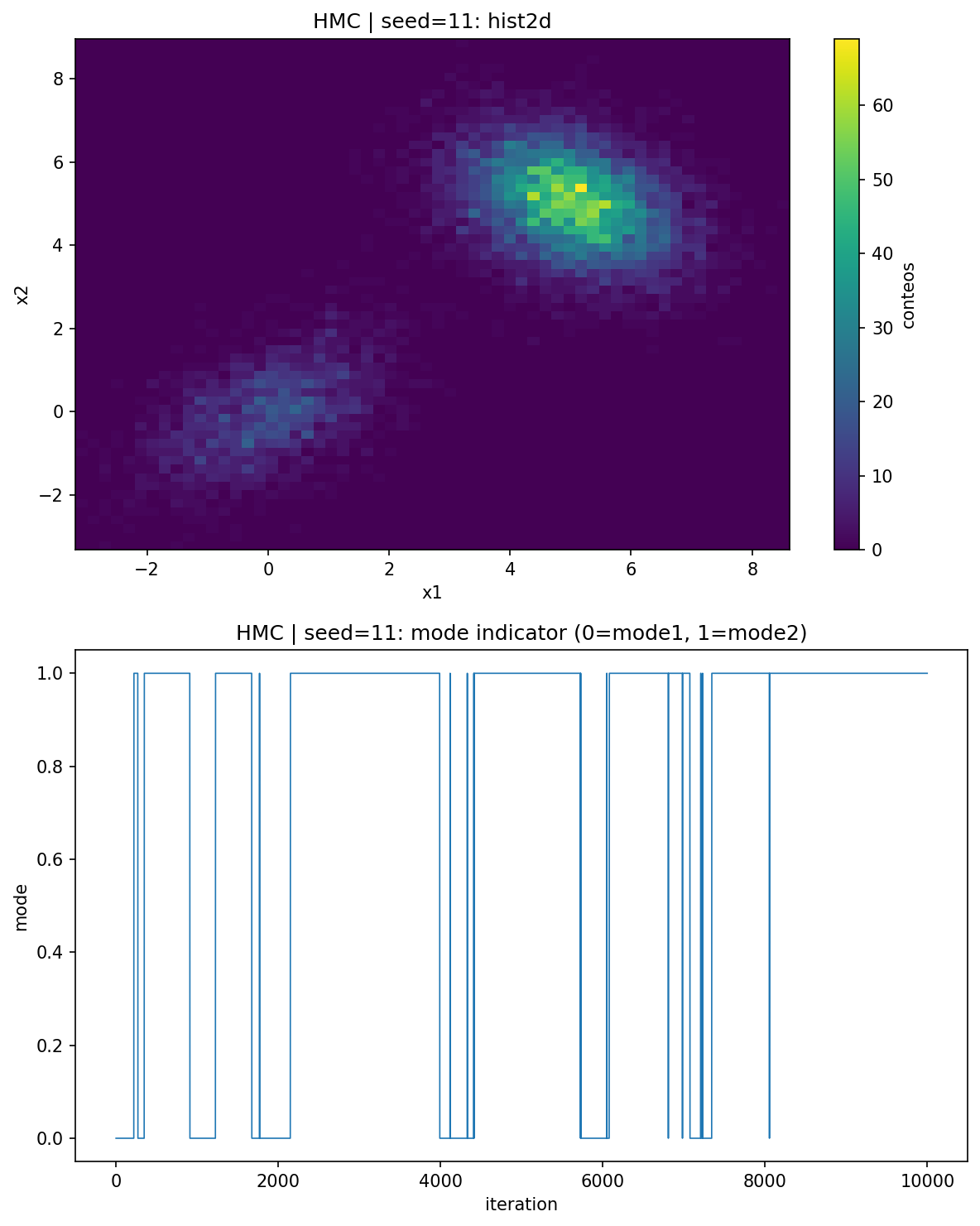}
    \caption{HMC}
\end{subfigure}
\hfill
\begin{subfigure}[t]{0.23\textwidth}
    \centering
    \includegraphics[width=\linewidth]{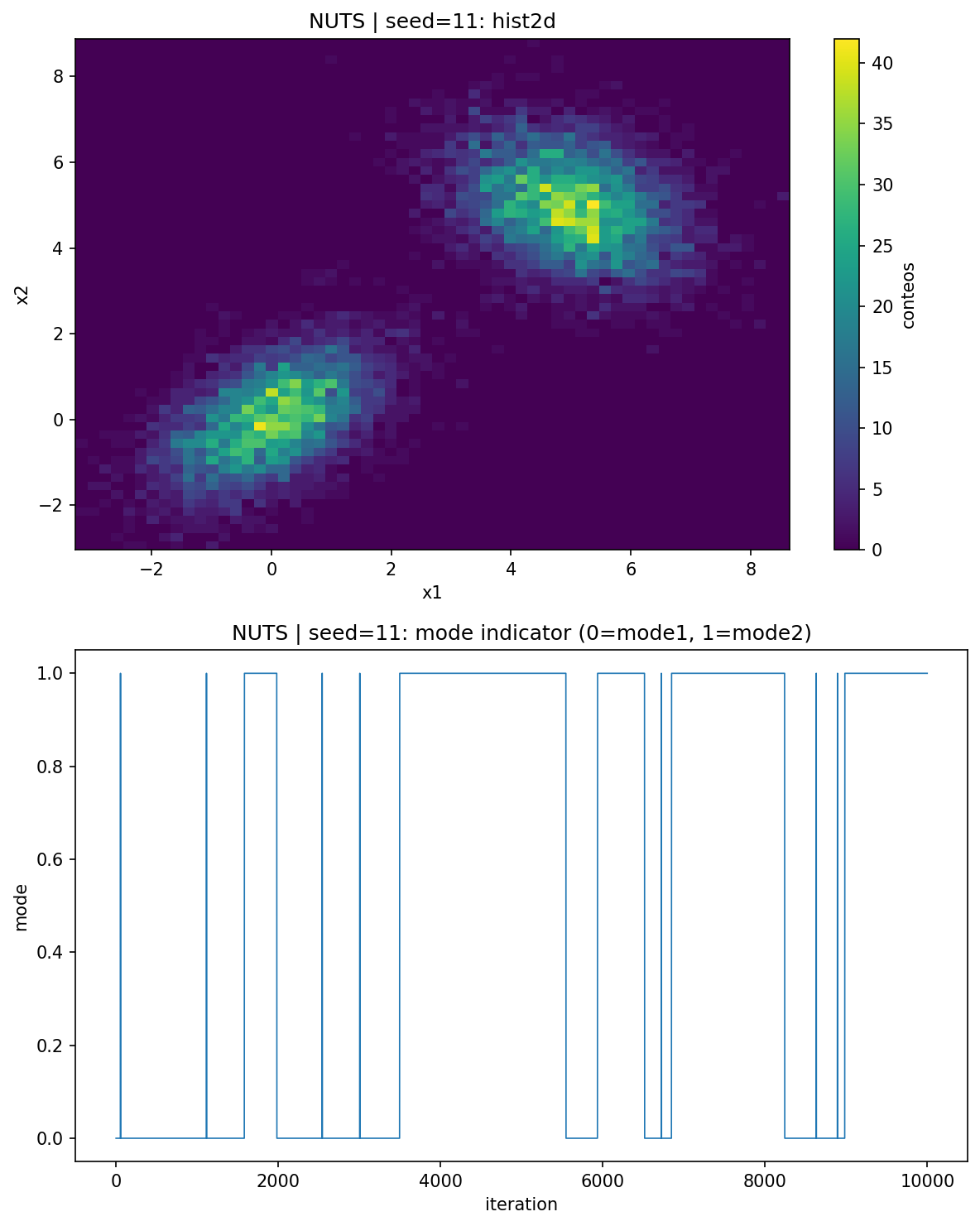}
    \caption{NUTS}
\end{subfigure}
\hfill
\begin{subfigure}[t]{0.23\textwidth}
    \centering
    \includegraphics[width=\linewidth]{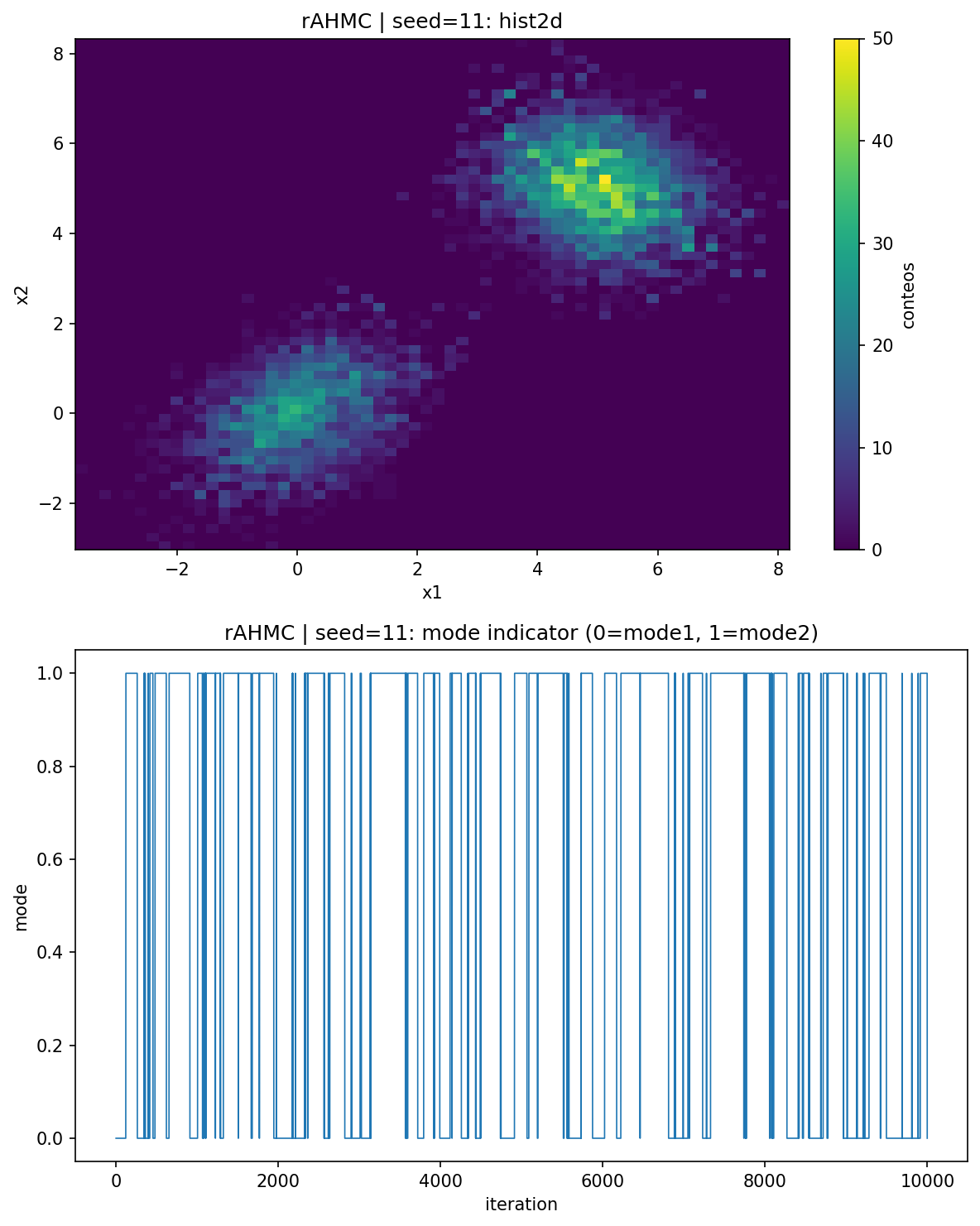}
    \caption{raHMC}
\end{subfigure}

\vspace{0.4cm}

\begin{subfigure}[t]{0.23\textwidth}
    \centering
    \includegraphics[width=\linewidth]{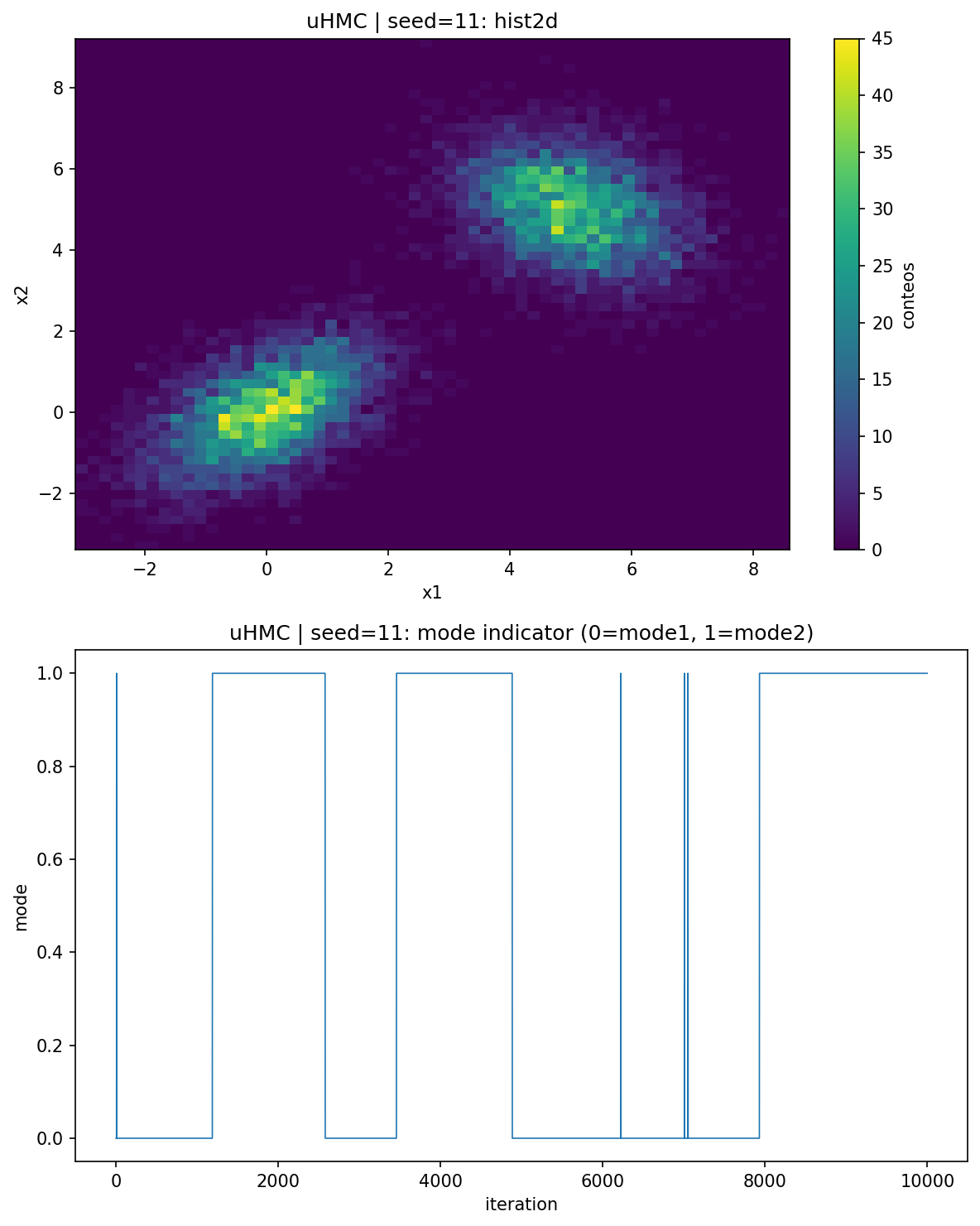}
    \caption{uHMC}
\end{subfigure}
\hfill
\begin{subfigure}[t]{0.23\textwidth}
    \centering
    \includegraphics[width=\linewidth]{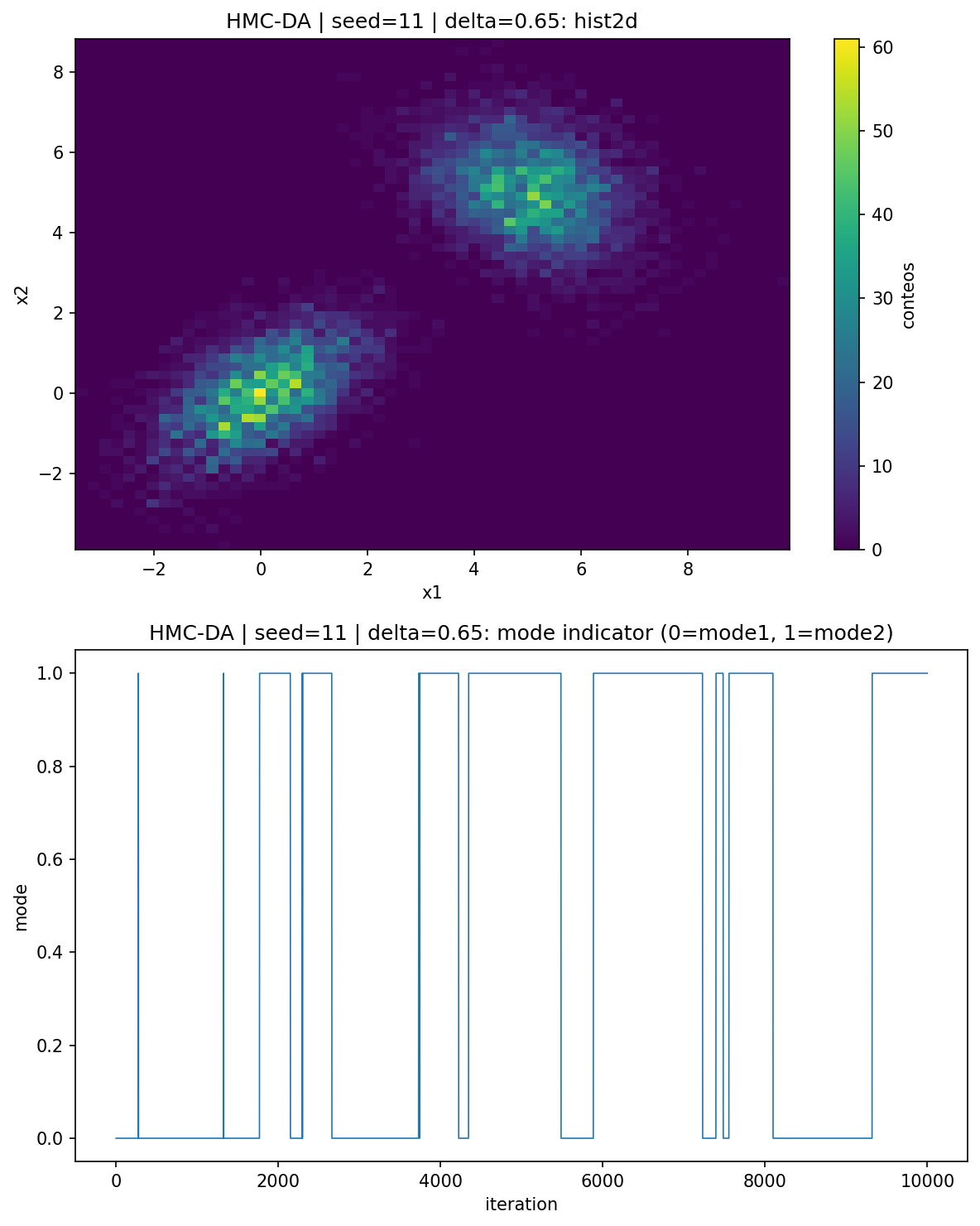}
    \caption{HMC-DA}
\end{subfigure}
\hfill
\begin{subfigure}[t]{0.23\textwidth}
    \centering
    \includegraphics[width=\linewidth]{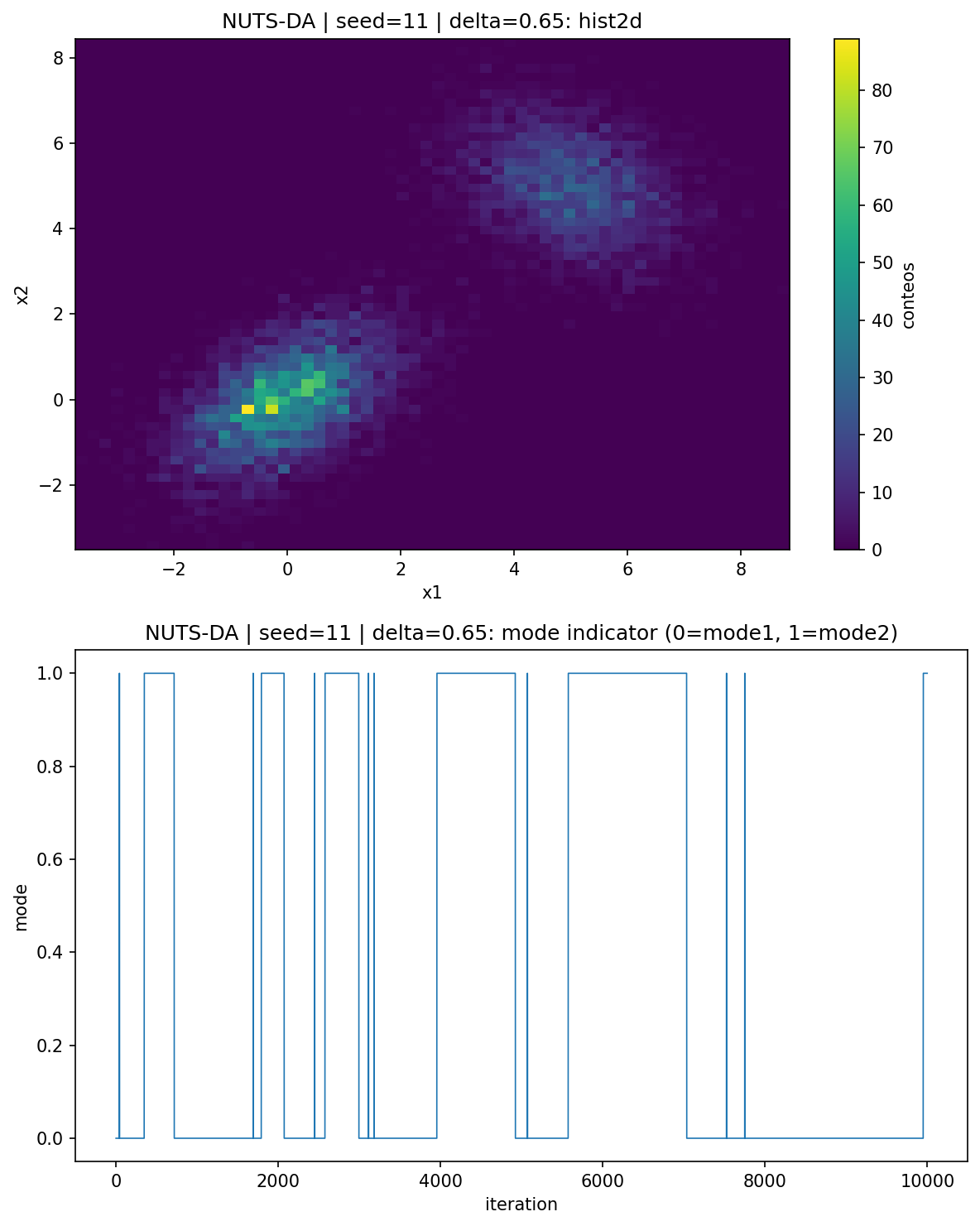}
    \caption{NUTS-DA}
\end{subfigure}
\hfill
\begin{subfigure}[t]{0.23\textwidth}
    \centering
    \includegraphics[width=\linewidth]{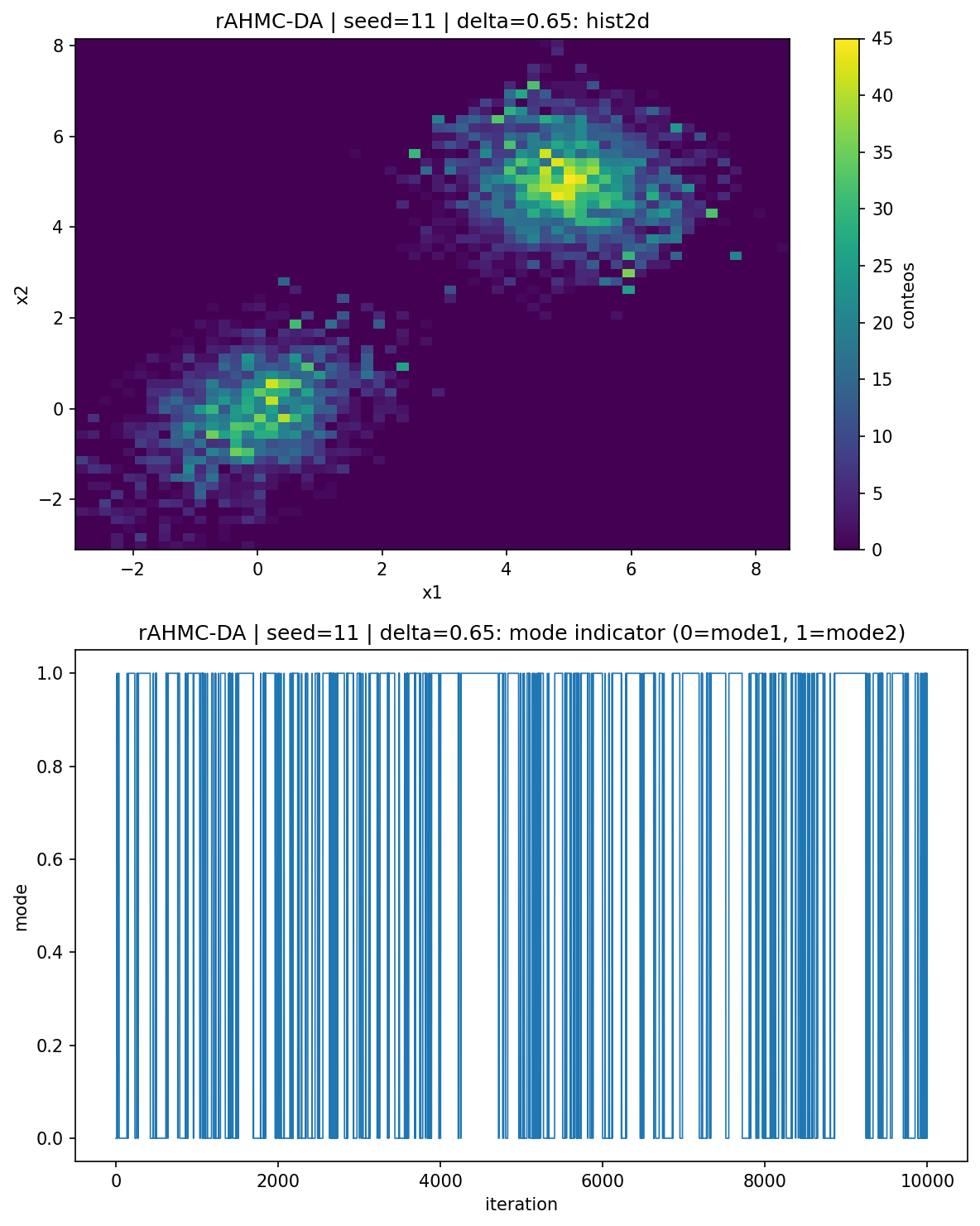}
    \caption{raHMC-DA}
\end{subfigure}

\caption{Histograms for the bivariate Gaussian mixture.}
\label{fig:bivariate_gaussian_mixture_histograms}
\end{figure}

For the bivariate Gaussian mixture example, we note that the raHMC algorithm achieves the best Min. ESS metric. Furthermore, when combined with Dual Averaging, it also achieves the best Sec./Min. ESS metric. This superior performance in a multimodal setting directly illustrates the theoretical advantages of the modified Hamiltonian dynamics introduced in Section \ref{sec:raHMC}. 

While standard HMC and RWMH frequently become trapped in the local energy wells of individual modes, raHMC successfully escapes these regions. By introducing a friction term $\pm\Gamma z_t$ into the dynamics, the algorithm artificially repels the trajectory away from critical points during the first half of the integration and attracts it during the second half. This physical mechanism empowers the sampler to overcome the low-probability energy valleys separating the modes, leading to significantly more frequent mode-switching (484 switches for raHMC-DA) than the remaining algorithms, although it does not spend a perfectly balanced amount of time in both modes on any given run.

\subsection{Hierarchical model}

The example of the eight schools is a Bayesian model presented in \cite{gelman1995bayesian}, where the effect of eight schools' training programmes for the SAT (\textit{Scholastic Aptitude Test}) is studied. This test is used by universities in the USA as a means of assessing the aptitude of applicants,  with the objective of making informed decisions regarding their admission. The SAT is designed to reflect the knowledge acquired over several years of education; therefore, it is anticipated that late efforts to enhance test scores will be ineffective. The objective of the eight-school model is to ascertain the impact of the training programmes on the test scores. Table \ref{escuelas} presents the estimated difference in test scores between students who participated in the training programme and those who did not, along with the respective sample deviations.

\begin{table}[t]
\caption{Training effects on SAT and their sample deviation (cf.~\cite{mescheder2017adversarial}).}
\label{escuelas}
\begin{tabular}{@{}lrrrc@{}}
\hline 
  School & Effect $y_{i}$  & Sample standard \\ &&deviation  $\kappa_{i}^{2}$ \\ 
 \hline 
1 &  2.8 & 0.8\\  
2  & 0.8 & 0.5 \\ 
3  & -0.3 & 0.8 \\ 
4  & 0.7 & 0.6 \\ 
5  & -0.1 & 0.5 \\ 
6  & 0.1 & 0.6 \\ 
7  & 1.8 & 0.5\\ 
8  & 1.2 & 0.4\\ 
 \hline
\end{tabular} 
\end{table}

The proposed model assumes that the training impact of the schools, $y_i \mid \theta_i, \kappa_i$, is conditionally independent and satisfies
\begin{equation*}
y_{i}\mid\theta_{i},\kappa_{i}\sim N(\theta_{i},\kappa^{2})
\quad\text{for}
\quad i\in \left\{1,2,...,8\right\},
\end{equation*}
where $\kappa_{i}$ is known and is given by the sample deviation of the $i$-th school presented in Table \ref{escuelas}, and $\theta_{i}=\mu+\tau\eta_{i}$.

In this exercise, we make inference on the parameters $\mu,\tau  \text{ and } \eta_{i}$ with $i\in \left\{1,2,...,8\right\}$, following the approach presented in \cite{mescheder2017adversarial}, where independent distributions $N(0,1)$ are assigned as the \textit{a priori} distributions for the variables  $\mu,\tau$ and $\eta_{i}$. Put differently, the joint \textit{a priori} density $p$ satisfies 
\begin{equation}\label{priomar}
p\left(\eta, \mu, \tau\right) \propto     \exp\left(-\frac{\mu^{2}}{2}\right)\exp\left(-\frac{\tau^{2}}{2}\right)\stackrel[i=1]{8}{\prod}\exp\left(-\frac{\eta_{i}^{2}}{2}\right),
\end{equation}
where $\eta = \left(\eta_{1},\eta_{2},\ldots,\eta_{8}\right) $. Note that the likelihood $\mathcal{L}\left(  y  \mid \eta, \mu, \tau \right)$ satisfies  
\begin{equation}\label{veromar}
\mathcal{L}\left(  y  \mid \eta, \mu, \tau \right) \propto\prod_{i=1}^{8}\exp\left(-\frac{\left(y_{i}-\left(\mu+\tau\eta_{i}\right)\right)^{2}}{2\kappa_{i}^{2}}\right),
\end{equation}
and the posterior density is given by
\begin{equation}\label{bayesss}
S'\left(y\mid\eta, \mu, \tau\right) = \frac{         p\left(\eta, \mu, \tau\right) \mathcal{L}\left(  y \mid \eta, \mu, \tau \right) }{\int _{(\eta, \mu, \tau) \in  \mathbb{R}^{10}} \mathcal{L}\left(  y \mid \eta, \mu, \tau \right)     p\left(\eta, \mu, \tau\right) d\eta\, d\mu\, d\tau }.
\end{equation}
Substituting \eqref{priomar} and \eqref{veromar} in \eqref{bayesss}, we can observe that the posterior density is proportional to 
\begin{equation}
\begin{aligned}
S(y \mid \eta,\mu,\tau)
&=
\exp\left(-\frac{\mu^2}{2}\right)
\exp\left(-\frac{\tau^2}{2}\right)
\prod_{i=1}^{8}\exp\left(-\frac{\eta_i^2}{2}\right) \\
&\quad \times
\prod_{i=1}^{8}
\exp\left(
-\frac{\left(y_i-(\mu+\tau\eta_i)\right)^2}{2\kappa_i^2}
\right).
\end{aligned}
\label{dis_poste}
\end{equation}

In order to simulate the posterior distribution using HMC, the potential energy of the Hamiltonian is defined as
\begin{eqnarray} \label{potencial_escuela}
U = -\log\left(S\left(y \mid \eta, \mu, \tau\right)\right) = 
\frac{\mu^{2}}{2} + \frac{\tau^{2}}{2} + \sum_{i=1}^{8} \left(\frac{\eta_{i}^{2}}{2} + \frac{\left(y_{i} - \left(\mu + \tau\eta_{i}\right)\right)^{2}}{2\kappa_{i}^{2}}\right),
\end{eqnarray}
and its gradient vector $(\frac{\partial U}{\partial\eta_{1}}, \dots, \frac{\partial U}{\partial\eta_{8}}, \frac{\partial U}{\partial\mu}, \frac{\partial U}{\partial\tau})$ must be computed as follows
\begin{equation}
\begin{gathered}
\frac{\partial U}{\partial \eta_i}
=
\eta_i
-
\left(
\frac{y_i-(\mu+\tau\eta_i)}{\kappa_i^2}
\right)\tau,
\qquad \text{for}\enskip
i=1,\dots,8, \\
\frac{\partial U}{\partial \mu}
=
\mu
-
\sum_{i=1}^{8}
\left(
\frac{y_i-(\mu+\tau\eta_i)}{\kappa_i^2}
\right), \quad
\frac{\partial U}{\partial \tau}
=
\tau
-
\sum_{i=1}^{8}
\left(
\frac{y_i-(\mu+\tau\eta_i)}{\kappa_i^2}
\right)\eta_i.
\end{gathered}
\label{grad_U}
\end{equation}

We initialized all algorithms at the point $(2,2,\ldots,2)$. 
The parameter values used in the experiment were fixed as follows. 
For RWMH, we used a Gaussian proposal with standard deviation $1.0$. 
For HMC, we set $\epsilon=0.05$ and $L=60$, while for NUTS we used $\epsilon=0.05$. 
For raHMC, we set $\epsilon=0.05$, $\gamma=0.35$, and $L=60$. 
For uHMC, we used $T=0.8$ and $h=0.01$. 
For the adaptive variants, we initialized HMC-DA with $\epsilon_0=0.05$ and $\lambda=3.0$, NUTS-DA with $\epsilon_0=0.03$, and raHMC-DA with $\epsilon_0=0.025$, $\gamma_0=0.25$, and $T=3.0$.

To visualize the performance of the algorithms, we display the joint density estimates for $(\mu,\tau)$ and $(\tau,\eta_1)$ in Figure~\ref{fig:eight_schools_results}. 
The corresponding numerical summaries are reported in Table~\ref{TablaEightSchools}.

\begin{table}[t]
\centering
\caption{Results for the Eight schools model}
\label{TablaEightSchools}
\begin{tabular}{lccc}
\hline
Algorithm & Target Acc. Pr. & Min. ESS & Sec./Min. ESS \\
\hline
HMC & 0.994 & 424.688 & 5.95e-02 \\

HMC-DA & 0.681 & 344.303 & 1.75e-02 \\

NUTS & N/A & 331.915 & 2.17e-01 \\

NUTS-DA & N/A & 283.630 & 7.58e-02 \\

RWMH & 0.009 & 17.892 & 2.97e-02 \\

raHMC & 0.414 & 407.254 & 1.45e-01 \\

raHMC-DA & 0.643 & 416.142 & 3.60e-01 \\

uHMC & N/A & 140.613 & 2.94e-01 \\

\hline
\end{tabular}
\end{table}

\begin{figure}[H]
\centering

\begin{subfigure}[t]{0.48\textwidth}
    \centering
    \includegraphics[width=\linewidth]{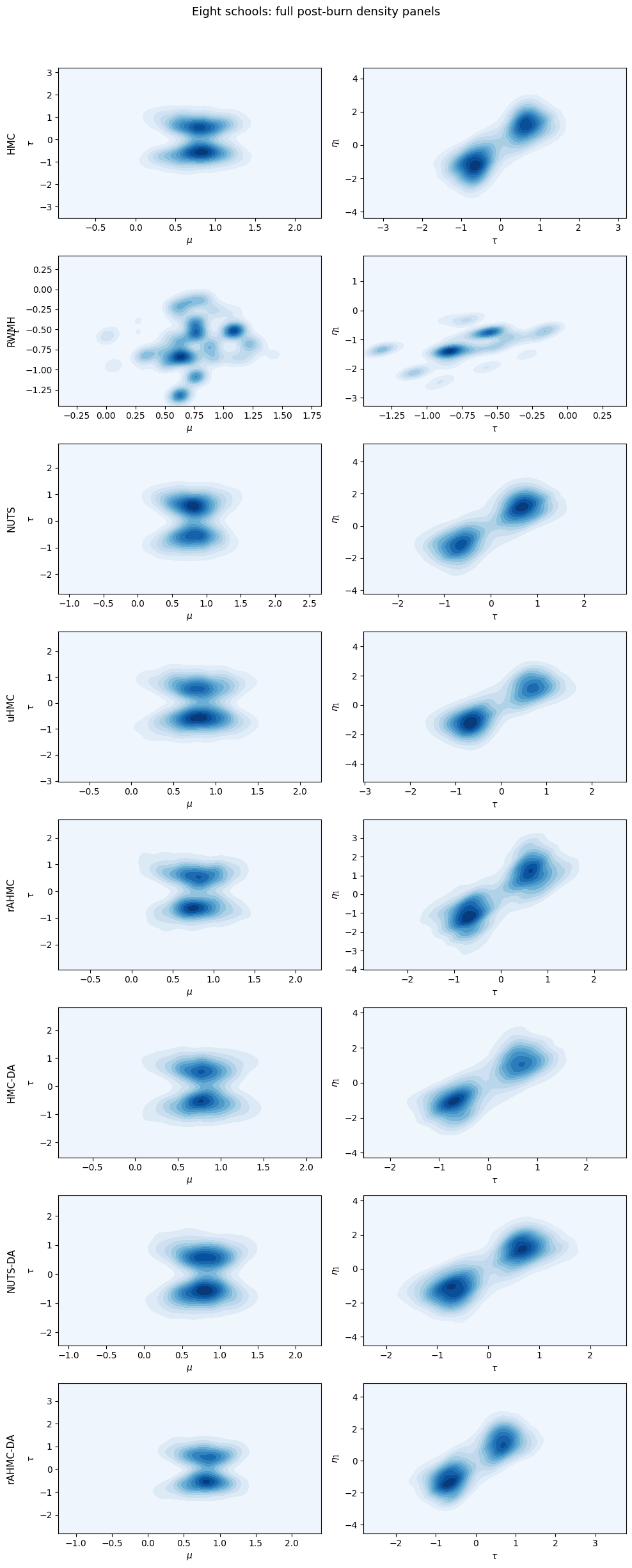}
\end{subfigure}
\hfill
\begin{subfigure}[t]{0.48\textwidth}
    \centering
    \includegraphics[width=\linewidth]{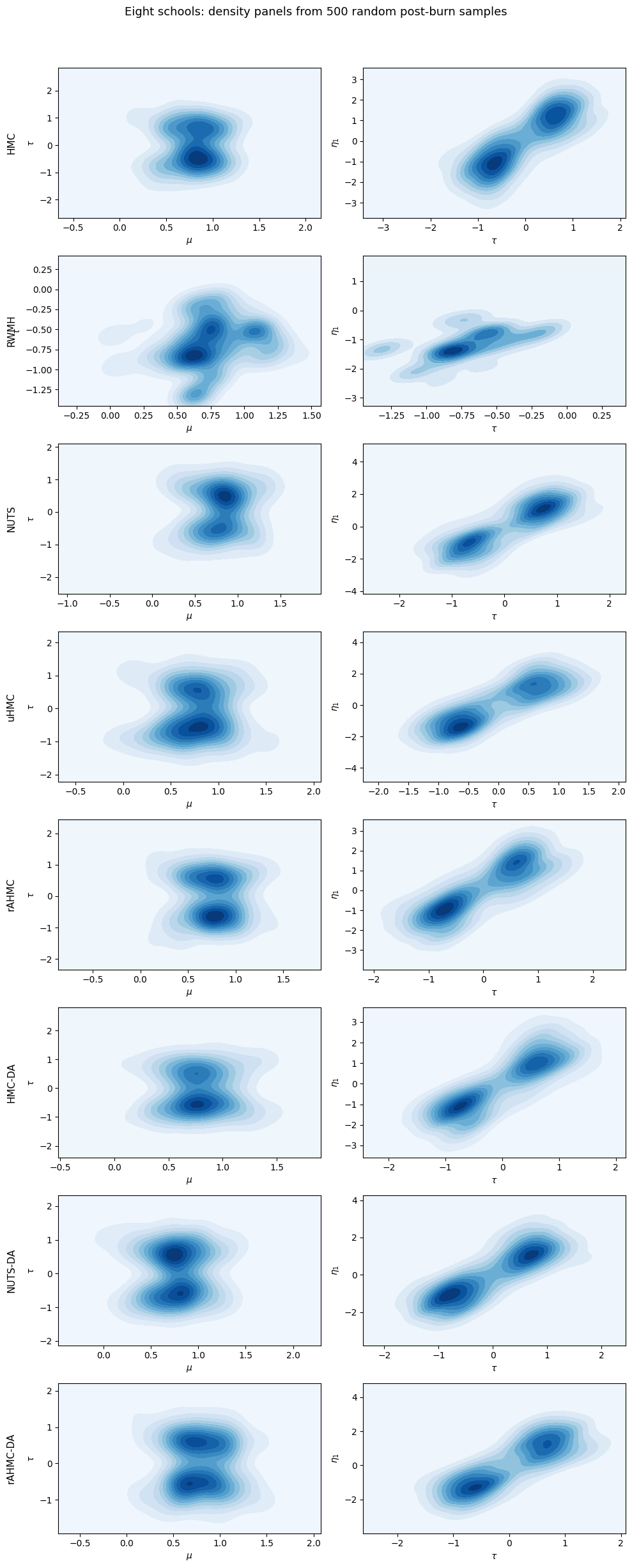}
\end{subfigure}

\caption{Eight schools results.}
\label{fig:eight_schools_results}
\end{figure}
In this example, the algorithms exhibit comparable performance. We only note
that RWMH  have an acceptance rate of 0.009 and a Min. ESS of
17.892 but it is among the fastest algorithms for this problem.
\subsection{Standard Cauchy}

We consider the standard Cauchy distribution as a prototypical example of a heavy-tailed distribution. Due to the absence of finite moments, this setting provides a challenging scenario for assessing the performance of sampling algorithms. 
The parameter values used in the experiment were fixed as follows. 
For RWMH, we used a Gaussian proposal with standard deviation $1.0$. 
For HMC, we set $\epsilon=0.15$ and $L=25$, where $\epsilon$ denotes the leapfrog step size and $L$ the number of leapfrog steps. 
For NUTS, we used $\epsilon=0.15$. 
For raHMC, we set $\epsilon=0.12$, $\gamma=0.5$, and $L=20$. 
For uHMC, we used an integration time $T=1.0$ and a discretization step $h=0.03$. 
For HMC-DA, we initialized the step size at $\epsilon_0=0.15$ and set $\lambda=4.0$. 
For NUTS-DA, we initialized the step size at $\epsilon_0=0.12$. 
Finally, for raHMC-DA, we used $\epsilon_0=0.10$, $\gamma_0=0.5$, and $T=4.0$. 
The numerical results are reported in Table~\ref{Cauchytabla}, while histograms of the generated samples are displayed in Figure~\ref{figCauchy}.

\begin{table}[t]
\centering
\caption{Results for the Cauchy distribution}
\label{Cauchytabla}
\begin{tabular}{lccc}
\hline
Algorithm & Target Acc. Pr. &  ESS & Sec./ESS \\
\hline
HMC & 0.998 & 711.518 & 3.02e-03 \\

HMC-DA & 0.696 & 299.863 & 2.77e-03 \\

NUTS & N/A & 3145.469 & 8.14e-03 \\

NUTS-DA & N/A & 1278.851 & 5.33e-03 \\

RWMH & 0.764 & 109.041 & 1.49e-03 \\

raHMC & 0.767 & 381.063 & 1.82e-02 \\

raHMC-DA & 0.639 & 350.769 & 3.59e-02 \\

uHMC & N/A & 123.394 & 4.23e-02 \\

\hline
\end{tabular}
\end{table}

For this heavy-tailed distribution, we observe that NUTS achieves the best
performance, followed by NUTS with Dual Averaging. In third place, we find HMC,
with an ESS close to that of NUTS with Dual Averaging.
\begin{figure}[H]
\centering

\begin{subfigure}{0.23\textwidth}
    \centering
    \includegraphics[width=\linewidth]{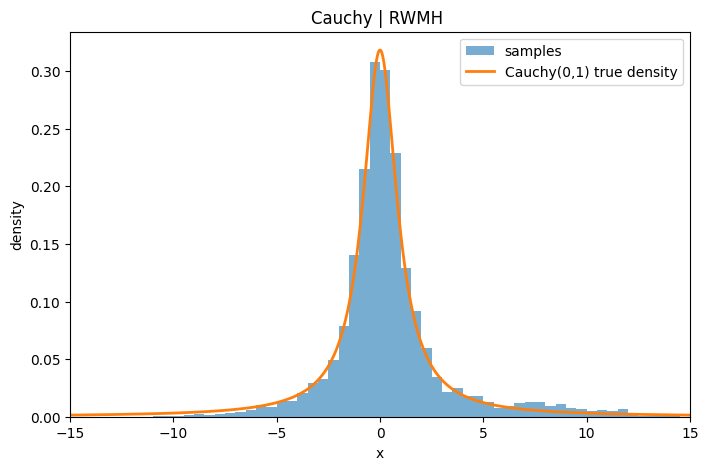}
    \caption{RWMH}
\end{subfigure}
\hfill
\begin{subfigure}{0.23\textwidth}
    \centering
    \includegraphics[width=\linewidth]{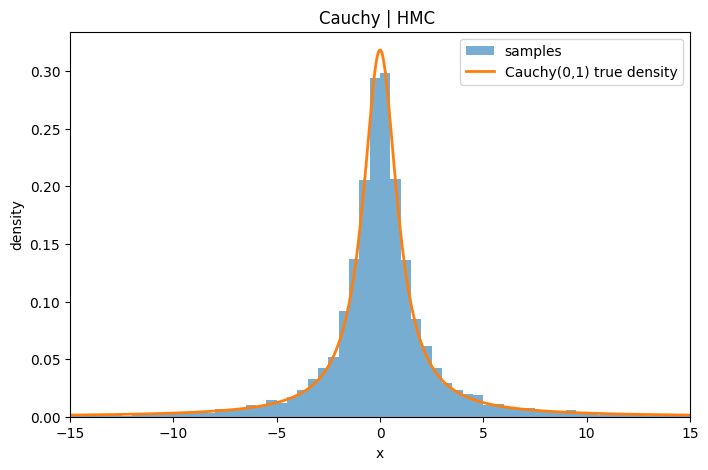}
    \caption{HMC}
\end{subfigure}
\hfill
\begin{subfigure}{0.23\textwidth}
    \centering
    \includegraphics[width=\linewidth]{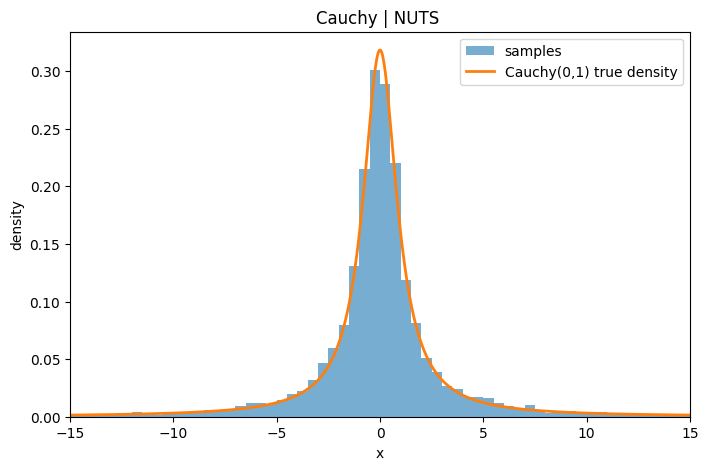}
    \caption{NUTS}
\end{subfigure}
\hfill
\begin{subfigure}{0.23\textwidth}
    \centering
    \includegraphics[width=\linewidth]{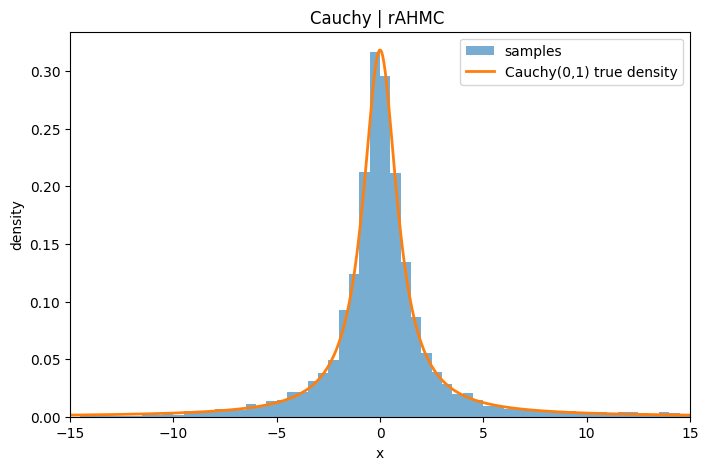}
    \caption{raHMC}
\end{subfigure}

\vspace{0.5cm}

\begin{subfigure}{0.23\textwidth}
    \centering
    \includegraphics[width=\linewidth]{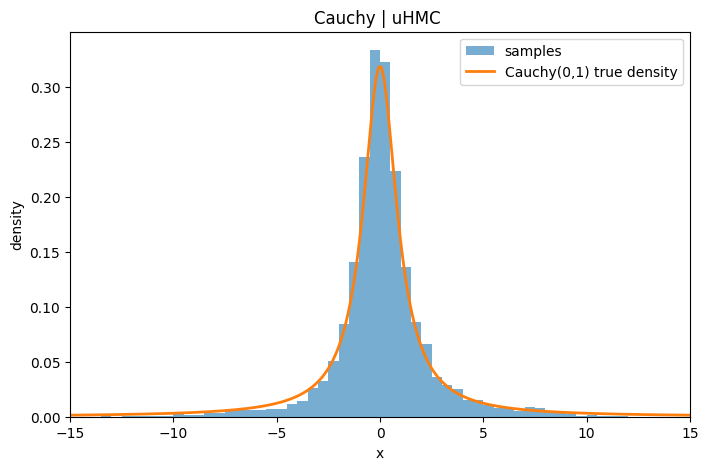}
    \caption{uHMC}
\end{subfigure}
\hfill
\begin{subfigure}{0.23\textwidth}
    \centering
    \includegraphics[width=\linewidth]{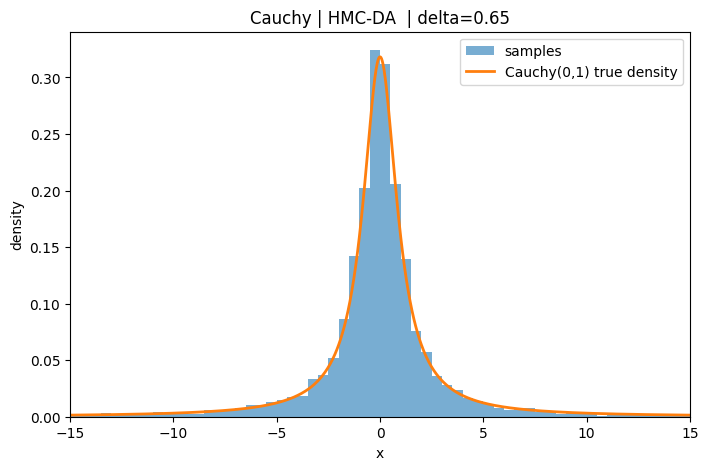}
    \caption{HMC-DA}
\end{subfigure}
\hfill
\begin{subfigure}{0.23\textwidth}
    \centering
    \includegraphics[width=\linewidth]{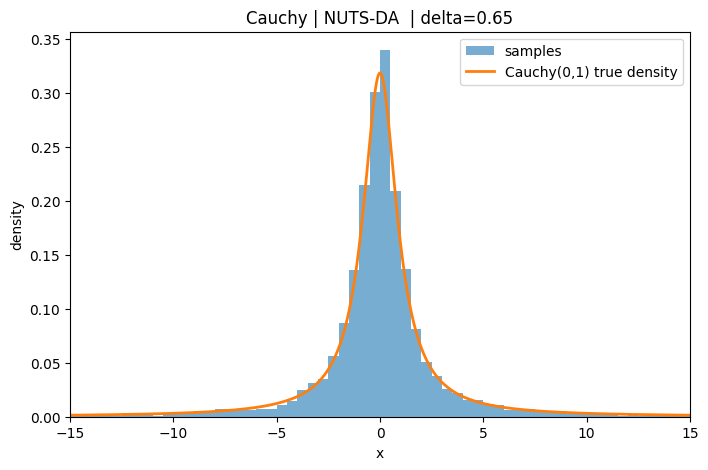}
    \caption{NUTS-DA}
\end{subfigure}
\hfill
\begin{subfigure}{0.23\textwidth}
    \centering
    \includegraphics[width=\linewidth]{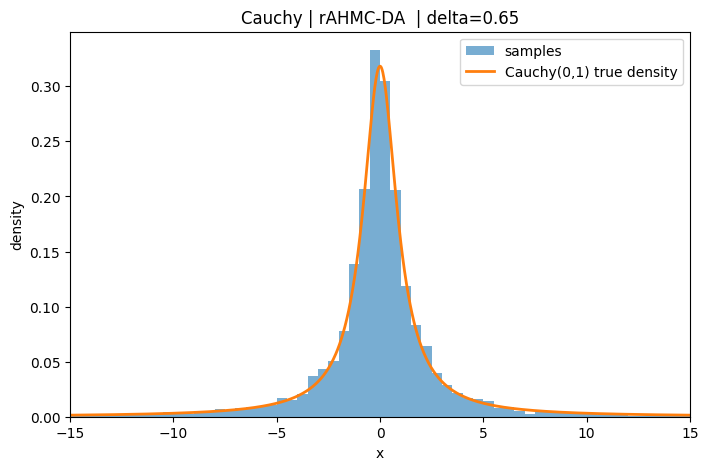}
    \caption{raHMC-DA}
\end{subfigure}

\caption{Histograms for the Cauchy target distribution.}
\label{figCauchy}
\end{figure}

\section{Conclusions}
\label{conclu}

In this work, we have provided a comprehensive exposition of the Hamiltonian Monte Carlo (HMC) algorithm, bridging the gap between its underlying physical principles and its application in Bayesian inference. By introducing the fundamentals of Hamiltonian mechanics in Section~\ref{apena}, we established the geometric and physical intuition necessary to understand the trajectory simulation within the phase space. This theoretical foundation is crucial for grasping how HMC efficiently explores target distributions by suppressing the random walk behavior typical of standard Markov Chain Monte Carlo methods.

Section~\ref{se_construc} detailed the explicit construction of the HMC algorithm, emphasizing the critical role of tuning parameters such as the leapfrog step size and the number of integration steps. To address these tuning challenges, we discussed the Dual Averaging scheme, which provides a robust mechanism for automatically adapting the step size during the burn-in phase, ensuring a stable acceptance rate and reducing the burden of manual parametrization. 

Furthermore, we expanded on the classical HMC framework by incorporating modern variants in Section~\ref{sec:modern_variants}. The No-U-Turn Sampler (NUTS) addresses the problem of selecting the optimal trajectory length by dynamically adapting the number of leapfrog steps, the Dual Averaging adaptively tunes the precision parameter $\epsilon$, while the repelling-attracting HMC (raHMC) modifies the momentum to improve the exploration of complex and multi-modal distributions. 

Through the numerical experiments and comparative analyses presented in Section~\ref{compacompa}, we demonstrated the practical advantages of HMC and its modern extensions over the traditional Random Walk Metropolis-Hastings (RWMH) algorithm. The results, evaluated in terms of execution time, Integrated Autocorrelation Time (IAT), Effective Sample Size (ESS), and acceptance rates, highlight that, while HMC requires computing the gradient of the log-posterior, the resulting efficient exploration of the parameter space heavily compensates for the computational cost per iteration. In particular, the modern variants proved to be highly effective at navigating complex target geometries, further automating the sampling process without sacrificing accuracy.

Ultimately, the theoretical insights and the accompanying Python implementations provide a solid starting point for readers with a mathematical background to fully leverage the potential of Hamiltonian dynamics in complex Bayesian modeling tasks. We end this discussion by stressing a significant limitation of HMC: it necessitates knowledge of the gradient of the target density, which is frequently unavailable, thereby precluding its implementation.


\begin{acks}
JGC was supported by UNAM-DGAPA-PAPIIT grant 36-IA104425, EPSRC grant EP/V009478/1. IB was supported by the grants by Japan Society for the Promotion of Science (21H05200,24K17144) and by the Science and Technology Research Partnership for Sustainable Development grant (JPMJSA2310). MM was supported by SECIHTI scholarship 4023498. 
\end{acks}

\begin{appendix}

\section{Classical Mechanics}\label{appA}

Classical mechanics, also known as Newtonian mechanics,is generally considered to have been initiated in \cite{newton}, where the foundations of classical mechanics are established by three laws of motion and the law of universal gravitation that unifies terrestrial and celestial mechanics. The aim of Newtonian mechanics is to model the motion of particles, which are a mathematical abstraction representing bodies as points in space to which physical properties, such as mass, can be attributed in real and complete three-dimensional vector spaces.

 According to \cite{clas_mec} the motion of a particle  can be described by a vectorial function $r\left( t \right)=\left(x(t),y(t),z(t) \right)$ which indicates the position of the particle at time $t$, as a function of each coordinate axis. The infinitesimal change of the position with respect to time is called velocity, is denoted by $v(t)$, and is calculated as the derivative of the position vector with respect to time, in mathematical notation
\begin{equation}\label{velocity}
v(t)=\dfrac{d r\left(t\right)}{dt}=\left(\frac{dx(t)}{dt},\frac{dy(t)}{dt},\frac{dz(t)}{dt}\right).
\end{equation}

The acceleration at time $t$ is defined as the infinitesimal change of velocity with respect to time. It is calculated as the derivative of the velocity vector and is denoted by $a\left(t\right)$, where
\begin{equation}
a(t)=\dfrac{d v\left(t\right)}{dt}=\left(\frac{d^{2}x(t)}{dt^{2}},\frac{d^{2}y(t)}{dt^{2}},\frac{d^{2}z(t)}{dt^{2}}\right).
\end{equation}

\subsection{Newton's laws of motion}
Newton's laws of motion describe the movement of objects and how forces interact with them. In this section, we will introduce Newton's three laws as presented in  \cite{laws}.

The first law, also called the law of inertia,  states that ``\textit{A body continues to remain in a state of rest or moves with a
uniform speed along a straight line unless subjected to an external
force.}''. This law, therefore, states that for a body to alter its motion, there must be some force that causes it to move, or in mathematical terms, where $F_{i}$ denotes the $i$-th force applied to a body,
\begin{equation}
\stackrel[i=1]{n}{\sum}F_{i}\left(x(t),y(t),z(t)\right)=\left(0,0,0 \right)  \text{ \hspace{0.05 cm}if \hspace{0.05 cm} and  \hspace{0.05 cm} only \hspace{0.05 cm} if \hspace{0.05 cm}}  \frac{dv}{dt}=\left(0,0,0 \right),
\end{equation}
that is, the resultant force applied to a particle is zero if and only if its velocity is not changed.

The second law mentions that in an inertial frame, i.e. a reference frame in which the first law is satisfied, ``\textit{The rate of change of linear momentum of a body is proportional to the magnitude of the external force acting on it and takes place in the direction of that force}.'' Thus, Newton's second law states that in an inertial frame, it is true that for any time $t$,
\begin{equation}
\label{segunda_ley} F\left(x(t),y(t),z(t)\right)=m\left(t \right) \dfrac{d v\left(t\right)}{dt},
\end{equation}
where $m\left(t\right)$ and  $F\left(t\right)$  represent the inertial mass and force at time $t$, respectively. From  \eqref{segunda_ley} follows the equation relating force to mass and acceleration 
\begin{equation}
F\left(x(t),y(t),z(t)\right)=m\left(t \right)a\left(t \right).
\end{equation}

Newton's third law says that: ``\textit{The forces exerted by two bodies on each other are equal in magnitude and opposite in direction.}''  This law is also known as the principle of action and reaction. It states that if one body applies a force on another, the latter will exert a force of equal magnitude and in the opposite direction on the former, i.e.,
\begin{equation}
f_{ij}=-f_{ji},
\end{equation}
where $f_{ij}$ is the force produced by the action of body $i$ on body $j$. 

\subsection{Work and Kinetic Energy}\label{appAB}

In mechanics, a force is said to do work on a particle when it displaces it from one point to another. The definition of work is based on a line integral which is defined in \cite{marsden}.

The work, denoted by $W_{ab}$, in the case of a particle of mass $m$, moving according to a trajectory $C$ under the action of a resultant force $F$ dependent on the particle's position, is given by
\begin{equation}
W_{ab}=\int_{C}F\left(r\left(t\right)\right)\cdotp dr=\sideset{}{_{a}^{b}}\int F\left(r\left(t\right)\right)\cdotp \frac{dr\left(t\right)}{dt}  dt,
\end{equation}
where $r(t)$ is the parameterization of the curve $C$ along which the particle moves and $\frac{dr\left(t\right)}{dt}$, as defined in \eqref{velocity}, is the velocity vector. Using Newton's second law and assuming constant mass, $F=ma\left(t\right)$, it follows that
\begin{eqnarray}
W_{ab}&=&\label{171}\frac{m}{2}(v(b)^{2}-v(a)^{2}).
\end{eqnarray}

If a velocity-dependent function $T$ is defined by $T\left(v\left(t\right)\right)=\frac{1}{2}mv(t)^{2}$,  can be rewritten as  \eqref{171} as
\begin{equation}
\label{traba_cin}   W_{ab}=T\left(v\left(b\right)\right)-T\left( v\left(a\right)\right).
\end{equation}
The term $T(v(t))$ is known as the particle's kinetic energy at time $t$, so  \eqref{traba_cin} states that the work done by a force is equal to the change in kinetic energy in time $b-a$. In mechanics, energy refers to the capacity of bodies to do work, and kinetic energy is that which a body possesses by virtue of being in motion.

\subsection{Conservative Forces and Potential Energy}
\label{cap_potencial}

A conservative force field is one in which the work done by a force to move a particle from point A to point B is independent of the trajectory. By definition, in conservative force fields, for any trajectories $v$ and $w$ from point A to point B, it holds that  
\begin{equation}
\begin{gathered}
\int_{w}F\left(r(t)\right)dr=\int_{v}F\left(r(t)\right)dr\\
\text{or}\qquad
\int_{w}F\left(r(t)\right)dr-\int_{v}F\left(r(t)\right)dr=0.
\end{gathered}
\label{wyyy}
\end{equation}

The left-hand side of the second equality in \eqref{wyyy} can be rewritten as a line integral over a curve Z, defined as a closed path starting and ending at A and passing through point B, that is
\begin{equation}
 \label{int_lineah}   \int_{Z}F\left(r(t)\right)dr=0.
\end{equation}

To find the forces $F$ that satisfy the equality \eqref{int_lineah}, Stokes' theorem is used, which relates surface integrals to line integrals. Its proof can be found in \cite{marsden}. Applying Stokes' theorem to Equation \eqref{int_lineah}, it follows that the conservative forces are those which, for some surface $S$, 
\begin{equation}
 \label{potpot} \int_{S}(\nabla\times F)dS=0.
\end{equation}

Note that  $F=-\nabla U(r(t))$,  where $U$ is a scalar function that depends on the position of the particle, satisfies Equation \eqref{potpot}, because  $\nabla\times(-\nabla U)=0$ as proved in \citep{man}. Therefore, the conservative forces are those that can be written in the form
\begin{equation}\label{wy2}
F=\left(-\frac{\partial U}{\partial x},-\frac{\partial U}{\partial y},-\frac{\partial U}{\partial z}\right)=-\nabla U\left(r\left(t\right)\right).    
\end{equation}
The value  $U\left(r\left(t\right)\right)$ in \eqref{wy2} is known as the potential energy and refers to the energy a particle possesses due to its position.

The mechanical energy of a system of particles at a time $t$, denoted by $E\left(t\right)$, is defined as the sum of the kinetic energy and the potential energy of the system, which is formulated as
\begin{equation}
E\left(t\right)=T\left(v\left(t\right)\right)+U\left(r\left(t\right)\right).
\end{equation}
Since $~\frac{dE\left(t\right)}{dt}=0$, as shown in \cite{clasica_m}, the mechanical energy is conserved, i.e., it does not change over time.
\end{appendix}


\bibliographystyle{imsart-number} 
\bibliography{bibliografia.bib}       

\end{document}